\newcommand{\arXiv}{}
\newcommand{\VLDB}{}
\newcommand{\subparagraph}{}
\let\proof\relax
\let\endproof\relax
\newcommand*{\NINEPAGES}{}
\newcommand*{\VLDBREVISION}{}
\newcommand*{\EXTENDED}{}
\newcommand{\changed}[1]{\textcolor{blue}{#1}} 
\newcommand{\changed}[1]{\textcolor{black}{#1}} 
\newcommand*{\inlineequation}[2][]{%
	\begingroup
	\refstepcounter{equation}%
	\ifx\\#1\\%
	\else
	\label{#1}%
	\fi
	\relpenalty=10000 %
	\binoppenalty=10000 %
	\ensuremath{%
		#2%
	}%
	~\@eqnnum
	\endgroup
}
\newtheorem{theorem}{Theorem}
\newtheorem{definition}{Definition}
\newtheorem{corollary}{Corollary}
\newtheorem{lemma}{Lemma}
\newtheorem{observation}[theorem]{Observation}
\newenvironment{sproof}{%
	\proof}{\endproof}
\newcommand{\matrixCellWidth}{5.8cm}
\newcommand{\matrixCellWidth}{5.5cm}
\newcommand{\tableDis}{d}
\renewcommand{\mid}{\ensuremath{:}}
\newcommand{\curFrameFreq}{\ensuremath{\mathit{cFreq}}}
\newcommand{\eps}{\epsilon}
\newcommand{\brackets}[1]{\ensuremath{\left[#1\right]}}
\newcommand{\BS}{\ensuremath{\mathit{BlockSize}}}
\newcommand{\Tables}{\ensuremath{\mathit{Tables}}}
\newcommand{\incTables}{\ensuremath{\mathit{incTables}}}
\newcommand{\incTable}{\ensuremath{\mathit{incTable}}}
\newcommand{\ghostTables}{\ensuremath{\mathit{ghostTables}}}
\newcommand{\OFFSET}{\ensuremath{\mathit{offset}}}
\newcommand{\blockSize}{\ensuremath{\mathit{s}}}
\newcommand{\IFText}{IntervalFrequency}
\newcommand{\IFTextD}{Interval-Frequency}
\newcommand{\TIFText}{TimeIntervalFrequency}
\newcommand{\IHHText}{IntervalHeavyHitters}
\newcommand{\SIText}{Interval}
\newcommand{\SIproblemParams}{n}
\newcommand{\SIProblem}{${\SIproblemParams}${-}{\SIText}}
\newcommand{\frameOffset}{\ensuremath{\mathit{fo}}}
\newcommand{\SSInstance}{\ensuremath{\mathit{SS}}}
\newcommand{\SIQText}{\SIText{}Query}
\newcommand{\IFQText}{\IFText{}Query}
\newcommand{\TIFQText}{\TIFText{}Query}
\newcommand{\IHHQText}{\IHHText{}Query}
\newcommand{\problemParams}{(W,\epsilon)}
\newcommand{\timeProblemParams}{(\Tau,R,\epsilon)}
\newcommand{\IFProblem}{${\problemParams}${-}{\IFText}}
\newcommand{\IFProblemD}{${\problemParams}${-}{\IFTextD}}
\newcommand{\ParameterizedIFProblem}[2]{${(#1,#2)}${-}{\IFText}}
\newcommand{\TIFProblem}{${\timeProblemParams}${-}{\TIFText}}
\newcommand{\IHHProblem}{${\problemParams}$\textbf{-}{\IHHText{}}}
\newcommand{\IFQGuarantee}[1][]{$$\xIntervalFrequency \le \xIntervalFrequencyEstimator \le \xIntervalFrequency + \epsError.#1$$}
\newcommand{\intervalHHThreshold}{\ensuremath{\theta\cdotpa{j-i}}}
\newcommand{\IHHDef}{\ensuremath{\set{x\in\mathcal U\mid \xIntervalFrequency\ge \intervalHHThreshold}}}
\newcommand{\xWindowFrequency}[1][w]{\ensuremath{f^{#1}_x}}
\newcommand{\xTimeWindowFrequency}[1][t]{\ensuremath{h^{#1}_x}}
\newcommand{\xWindowFrequencyEstimator}{\ensuremath{\widehat{\xWindowFrequency}}}
\newcommand{\xIntervalFrequency}{\ensuremath{f^{i,j}_x}}
\newcommand{\xTimeIntervalFrequency}{\ensuremath{h^{i,j}_x}}
\newcommand{\xTimeIntervalFrequencyEstimator}{\ensuremath{\widehat{\xTimeIntervalFrequency}}}
\newcommand{\xSetIntervalFrequency}{\ensuremath{g^{i, j}_x}}
\newcommand{\xSetWindowFrequency}[1][n]{\ensuremath{g^{#1}_x}}
\newcommand{\xParameterizedIntervalFrequency}[2]{\ensuremath{f^{#1,#2}_x}}
\newcommand{\xIntervalFrequencyEstimator}{\ensuremath{\widehat{\xIntervalFrequency}}}
\newcommand{\IntervalHH}[1][\theta]{\ensuremath{HH_{#1}^{i,j}}}
\newcommand{\IntervalHHEstimator}{\ensuremath{\widehat{\IntervalHH}}}
\newcommand{\epsError}[1][W]{\ensuremath{#1 \epsilon}}
\newcommand{\IFQ}{{\sc {\IFQText}}}
\newcommand{\SIQ}{{\sc {\SIQText}}}
\newcommand{\IHHQ}{{\sc {\IHHQText}}}
\newcommand{\BIFQ}{{\sc \textbf{\IFQText}}}
\newcommand{\BTIFQ}{{\sc \textbf{\TIFQText}}}
\newcommand{\BSIQ}{{\sc \textbf{\SIQText}}}
\newcommand{\BIHHQ}{{\sc \textbf{\IHHQText}}}
\newcommand{\set}[1]{\left\{#1\right\}}
\newcommand{\ceil}[1]{ \left\lceil{#1}\right\rceil}
\newcommand{\floor}[1]{ \left\lfloor{#1}\right\rfloor}
\newcommand{\angles}[1]{ \left\langle{#1}\right\rangle}
\newcommand{\logp}[1]{\log\parentheses{#1}}
\newcommand{\parentheses}[1]{ \left({#1}\right)}
\newcommand{\cdotpa}[1]{\cdot\parentheses{#1}}
\newcommand{\inc}[1]{$#1 \gets #1 + 1$}
\newcommand{\dec}[1]{$#1 \gets #1 - 1$}
\newcommand{\range}[2][0]{#1,1,\ldots,#2}
\newcommand{\frange}[1]{\set{\range{#1}}}
\newcommand{\oneOverE}{ \ensuremath{\eps^{-1}} }
\newcommand{\oneOverG}{ \frac{1}{\gamma} }
\newcommand{\window}{W}
\newcommand{\logw}{\log \window}
\newcommand{\weps}{\window\epsilon}
\newcommand{\numBlocks}{n}
\newcommand{\logn}{\ensuremath{\log\winSize}}
\newcommand{\winSize}{\ensuremath{n}}
\newcommand\Tau{\mathrm{T}}
\newcommand{\offset}{\ensuremath{\mathit{offset}}}
\newcommand{\RSS}{Rounded Space Saving}
\newcommand{\rss}{RSS}
\newcommand{\slack}{1+\gamma}
\newcommand{\frqEst}{$(\epsilon,M)$-{\sc Volume Estimation}}
\newcommand{\query}{{\sc Query$(x)$}}
\newcommand{\winQuery}{{\sc WinQuery$(x)$}}
\newcommand{\windowQueryText}{{\sc WinQuery}}
\newcommand{\windowQuery}{\windowQueryText$(x,w)$}
\newcommand{\SSSInstance}{rss}
\newcommand{\queueOfOverflows}{b}
\newcommand{\sumOfOverflows}{B}
\newcommand{\overflowIndicator}{u}
\renewcommand{\gamma}{\phi} 
\begin{document}



\title{Stream Frequency Over Interval Queries
\thanks{
This work is partially supported by ISF grant \#1505/16, Technion HPI research school, Zuckerman Institute and the \mbox{Technion Hiroshi Fujiwara cyber security research center.}}
}

\ifdefined\arXiv
\author[1]{Ran Ben Basat\thanks{ran@seas.harvard.edu}}
\author[2]{Roy Friedman\thanks{roy@cs.technion.ac.il}}
\author[2]{Rana Shahout\thanks{ranas@cs.technion.ac.il}}
\affil[1]{Computer Science, Harvard University}
\affil[2]{Computer Science, Technion}
\else



%
%
%
%

\numberofauthors{3} 

\author{
%
%
\alignauthor
Ran Ben Basat\\
       \affaddr{Harvard University}\\
       \email{ran@seas.harvard.edu}
\alignauthor
Roy Friedman\\
       \affaddr{CS Technion}\\
       \email{roy@cs.technion.ac.il}
\alignauthor
Rana Shahout\\
       \affaddr{CS Technion}\\
	\email{ranas@cs.technion.ac.il}
}
\fi

\setlength{\textfloatsep}{14pt}
\setlength{\intextsep}{16pt}
\setlength{\dbltextfloatsep}{12pt}
\setlength{\abovedisplayskip}{3pt}
\setlength{\belowdisplayskip}{3pt}
\titlespacing*{\section} {0pt}{1.2ex plus .7ex minus .2ex}{1.2ex plus .2ex}
\titlespacing*{\subsection} {0pt}{1.1ex plus .7ex minus .2ex}{1.00ex plus .2ex}
\titlespacing*{\subsubsection}{0pt}{0.9ex plus .7ex minus .2ex}{0.90ex plus .2ex}

\maketitle
\begin{abstract}

Stream frequency measurements are fundamental in many data stream applications such as financial data trackers, intrusion-detection systems, and network monitoring.
Typically, recent data items are more relevant than old ones, a notion we can capture through a \emph{sliding window} abstraction.
This paper considers a generalized sliding window model that supports stream frequency queries over an interval given at \emph{query time}.
This enables drill-down queries, in which we can examine the behavior of the system in finer and finer granularities.
For this model, we asymptotically improve the space bounds of existing work, reduce the update and query time to a constant, and provide deterministic solutions.
When evaluated over real Internet packet traces, our fastest algorithm processes items $90$--$250$ times faster, serves queries at least $730$ times quicker and consumes at least $40\%$ less space than the best known method.

\end{abstract}

\maketitle
\section{Introduction}
\label{sec:intro}


High-performance stream processing is essential for many applications such as financial data trackers, intrusion-detection systems, network monitoring, and sensor networks.
Such applications require algorithms that are both time and space efficient to cope with high-speed data streams.
Space efficiency is needed, due to the memory hierarchy structure, to enable cache residency and to avoid page swapping. This residency is vital for obtaining good performance, even when the theoretical computational cost is small (e.g., constant time algorithms may be inefficient if they access the DRAM for each element).
To that end, stream processing algorithms often build compact approximate \emph{sketches} (synopses) of the input streams.

Recent items are often more relevant than old ones, which requires an aging mechanism for the sketches.
Many applications realize this by tracking the stream's items over a \emph{sliding window}.
That is, the sliding window model~\cite{DatarGIM02} considers only a window of the most recent items in the stream,  while older ones do not affect the quantity we wish to estimate.
Indeed, the problem of maintaining different types of sliding window statistics \mbox{was extensively studied~\cite{ArasuM04,WCSS,DatarGIM02,papapetrou2015sketching,LeeT06}.}

Yet, sometimes the window of interest may not be known a priori \changed{or they may be multiple interesting windows~\cite{dallachiesa2013identifying}}.
Further, the ability to perform drill-down queries, in which we examine the behavior of the system in finer and finer granularity may also be beneficial, especially for security applications.
For example, this enables detecting when precisely a particular anomaly has started and who was involved in it~\changed{\cite{Estan2002}}.
Additional applications for this capability include identifying the sources of flash crowd effects and pinpointing the cause-effect relation surrounding a surge in demand on an e-commerce website~\changed{\cite{ecommerce}}.

In this work, we study a model that allows the user to specify an interval of interest at query time.
This extends traditional sliding windows that only consider fixed sized windows.
As depicted in Figure~\ref{fig:overview}, a sub-interval of a maximal window is passed as a parameter for each query, and the goal of the algorithm is to reply correspondingly.
Naturally, one could maintain an instance of a sliding window algorithm for each possible interval within the maximal sliding window.
Alas, this is both computationally and space inefficient.
\mbox{Hence, the challenge is to devise efficient solutions.}

This same model was previously explored in~\cite{papapetrou2015sketching}, which based their solution on \emph{exponential histograms}~\cite{DatarGIM02}.
However, as we elaborate below, their solution is both memory wasteful and computationally inefficient.
Further, they only provide probabilistic guarantees.

\subsection*{Contributions}

\begin{table*}[t]
	\scriptsize
	\addtolength{\tabcolsep}{-0pt} \addtolength{\parskip}{-0.1mm} 
	\centering{
		\begin{tabular}{|@{}c@{}|c|c|c|c|>{\tiny}c|}
			\hline
			Algorithm & Space &  Update Time & Query Time & Comments\tabularnewline
			\hline
			\hline
			WCSS ~\cite{WCSS}& $O(\eps^{-1}\log (W|\mathcal U|))$ & $O(1)$ & $O(1)$ & Only supports fixed-size window queries.
			\tabularnewline
			\hline
			ECM ~\cite{papapetrou2015sketching}& $O(\eps^{-2}\log W\log\delta^{-1})$ & $O(\log\delta^{-1})$ & $O(\eps^{-1}\log W\log\delta^{-1})$ & Only provides probabilistic guarantees.\tabularnewline
			\hline
			RAW & $O(\eps^{-2}\log (W|\mathcal U|))$ & $O(\oneOverE)$ & $O(1)$ & Uses prior art (WCSS) as a black box.\tabularnewline
			\hline
			\multirow{3}{*}{ACC$_k$} & \multirow{1}{*}{\hspace{-15mm}$O\Big(\oneOverE\log (W|\mathcal U|)$} & \multirow{3}{*}{$O(k+\eps^{-2}/W)$} & \multirow{3}{*}{$O(k)$} & Constant time operations for \tabularnewline
            & $+k\eps^{-(1+1/k)} \log\oneOverE\Big)$& & & $k=O(1)\wedge \eps=\Omega(W^{-1/2})$.\tabularnewline
			\hline
			\multirow{3}{*}{HIT} & \multirow{3}{*}{$O(\eps^{-1}(\log (W|\mathcal U|) + \log^2\oneOverE))$} & \multirow{3}{*}{\changed{$O(1 + \parentheses{\oneOverE\cdot\log{ \oneOverE} }/W)$}} & \multirow{3}{*}{$O(\log \oneOverE)$} & Optimal space when $\log^2\oneOverE=O(\log (W|\mathcal U|))$,
			\tabularnewline & & & &
			 $O(1)$ time updates when $\eps=\Omega\parentheses{\frac{\logw}{W}}$.
			\tabularnewline
			\hline
		\end{tabular}
		\normalsize	
		{{\caption{Comparison of the algorithms proposed in the paper with ECM and WCSS (that solves the simpler problem of fixed-size windows). ACC$_k$ can be instantiated for any $k\in\mathbb N$.
\ifdefined\NINEPAGES
\vspace*{-0.2cm}
\fi				
						}
				\label{tbl:comparison}
	}}}
\end{table*}

\begin{figure}[t]
	\centering
	\includegraphics[width=\linewidth]{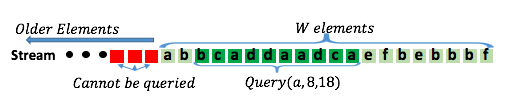}
	\vspace{-8mm}	\caption{\small{
    We process items and support frequency queries within an interval specified at query time.
    While the traditional sliding window model can answer queries for a \emph{fixed window}, our approach allows us to consider \emph{any} interval that is contained within the last $W$ items.
    In this example, we ask about the frequency of the item $a$ within the interval $[8,18]$. If we allow an additive error of $2$, the answer to this query should be in the range $[4, 6]$.}
	}
	\label{fig:overview}
	\vspace{-3mm}
\end{figure}
\normalsize

Our work focuses on the problem of estimating frequencies over an ad-hoc interval given at query time. 
We start by introducing a formal definition of this generalized estimation problem nicknamed \IFProblem{}.

To systematically explore the problem, we first present a na\"{\i}ve strawman algorithm (RAW), which uses multiple instances of a state-of-the-art fixed window algorithm.
In such an approach, an interval query is satisfied by querying the instances that are closest to the beginning and end of the interval and then subtracting their results.
This algorithm is memory wasteful and its update time is slow, but it serves as a baseline for comparing our more sophisticated solutions.
Interestingly, RAW achieves constant query time while the previously published ECM algorithm~\cite{papapetrou2015sketching} answers queries in $O(\eps^{-1}\log W\log\delta^{-1})$, where $W$ is the maximal window size and $\delta$ is the probability of failure. Additionally, it requires about the same amount of memory and is \mbox{deterministic while ECM has an error probability.}

While developing our advanced algorithms, we discovered that both intrinsically solve a common problem that we nickname \SIProblem.
Hence, our next contribution is in identifying and formally defining the \SIProblem{} problem and showing a reduction from \SIProblem{} to the \IFProblem{} problem.
This makes our algorithms shorter, simpler, and easier to prove, \mbox{analyze and implement.}

Our algorithms, nicknamed \emph{HIT} and $ACC_k$ (to be precise, $\set{ACC_k}_{k\ge 1}$ is a family of algorithms), process items in constant time (under reasonable assumptions on the error target) -- asymptotically faster than RAW.
HIT is asymptotically memory optimal while serving queries in logarithmic time.
Conversely, $ACC_k$ answers queries in constant time and incurs a sub-quadratic space overhead.

We present formal correctness proofs as well as space and runtime analysis.
We summarize our solutions' asymptotic performance in Table~\ref{tbl:comparison}.

Our next contribution is a performance evaluation study of our various algorithms along with ($i$) ECM-Sketch ~\cite{papapetrou2015sketching}, the previously suggested solution for interval queries and ($ii$) the state-of-the-art \emph{fixed window} algorithm (WCSS)~\cite{WCSS}, which serves as a best case reference point since it solves a more straightforward problem.
We use on real-world packet traces from Internet backbone routers, from a university datacenter, and from a university's border router.
Overall, our methods (HIT and ACC$_k$) process items $75$--$2000$ times faster and consume at least $20$ times less space than the naive approach (RAW) while requiring a similar amount of memory as the state-of-the-art \emph{fixed size window} algorithm (WCSS).
Compared to the previously known solution to this problem (ECM-Sketch~\cite{papapetrou2015sketching}), all our advanced algorithms are both faster and more space efficient.
In particular, our fastest algorithm, ACC$_1$, processes items $90$--$250$ times faster than ECM-Sketch, serves queries at least $730$ times quicker and consumes at least $40\%$ less~space.

Last, we extend our results to time-based intervals, heavy hitters~\cite{MG,berinde2010space}, hierarchical heavy hitters~\cite{CormodeHHH,Hershberger2005}, and for detecting traffic volume heavy-hitters~\cite{DIM-SUM}, i.e., when counting each flow's total traffic rather than item count.
We also discuss applying our algorithms in a distributed settings, in which measurements are recorded independently by multiple sites (e.g., multiple routers), and the goal is to obtain a global network analysis.

\paragraph*{Paper roadmap}
We briefly survey related work in Section~\ref{sec:related}.
We state the formal model and problem statement in Section~\ref{sec:prelim}.
Our na\"{\i}ve algorithm RAW is described in Section~\ref{windows-sizes}.
We present the auxiliary \SIProblem{} problem, which both our advanced algorithms solve and has a simple reduction to the \IFProblem{} problem, in Section~\ref{sec:block-interval}.
The improved algorithms, \emph{HIT} and $ACC_k$, are then described in Section~\ref{sec:improved_algos}.
The performance evaluation of our algorithms and their comparison to ECM-Sketch and WCSS is detailed in Section~\ref{sec:eval}.
Section~\ref{sec:extensions} discusses extensions of our work.
Finally, we conclude with a discussion in Section~\ref{sec:discussion}.

\section{Related Work}
\label{sec:related}

Count Sketch~\cite{CountSketch} and Count Min Sketch~\cite{CMSketch} are perhaps the two most widely used sketches for maintaining item's frequency estimation over a stream. 
The problem of estimating item frequencies over sliding windows was first studied in~\cite{ArasuM04}.
For estimating frequency within a $\weps$ additive error over a $W$ sized window, their algorithm requires $O(\oneOverE\log^2\oneOverE\log W)$ bits.
This was then reduced to the optimal $O(\oneOverE\log W)$ bits~\cite{LeeT06}.
In~\cite{HungLT10}, Hung and Ting improved the update time to $O(1)$ while being able to find all heavy hitters in the optimal $O(\oneOverE)$ time.
Finally, the WCSS algorithm presented in~\cite{WCSS} also estimates item frequencies in constant time.
While some of these works also considered a variant in which the window can expand and shrink when processing updates~\cite{ArasuM04,LeeT06}, its size was increased/decreased by one at each update, and cannot be specified at query~time.

The most relevant paper that solves the same problem as our work is~\cite{papapetrou2015sketching}, who was the first to explore heavy hitters interval queries.
They introduced a sketching technique with probabilistic accuracy guarantees called \emph{Exponential Count-Min sketch} (ECM-Sketch).
ECM-Sketch combines \emph{Count-Min sketch}' structure~\cite{CMSketch} with \emph{Exponential Histograms}~\cite{DatarGIM02}.
Count-Min sketch is composed of a set of $d$ hash functions, and a 2-dimensional array of counters of width $w$ and depth $d$. To add an item $x$ of value $v_x$, Count-Min sketch increases the counters located at $CM[ j, h_j (x)]$ by $v_x$, for $1\le j\le d$. Point query for an item q is done by getting the minimum value of the corresponding cells.

\changed{Exponential Histograms~\cite{DatarGIM02} allow tracking of metrics over a sliding window to within a multiplicative error. Specifically, they allow one to estimate the number of $1$'s in a sliding window of a binary stream. To that end, they utilize a sequence of \emph{buckets} such that each bucket stores the timestamp of the oldest $1$ in the bucket. When a new element arrives, a new bucket is created for it; to save space, the histogram may merge older buckets. While the amortized update complexity is $O(1)$, some arriving elements may trigger an $O(\log W)$-long cascade of bucket merges.
}

ECM-Sketch replaces each Count-Min counter with an Exponential Histogram. Adding an item $x$ to the structure is analogous to the case of the regular Count-Min sketch.
For each of the histograms $CM[$j$, h_j(x)]$, where $1\le j\le d$, the item is registered with time/count of its arrival and all expired information is removed from the Exponential Histogram.
To query item $x$ in range $r$, each of the corresponding $d$ histograms $E($j$, h_j(x, r))$, where $1\le j\le d$, computes the given query range.
The estimate value for the frequency of $x$ is $\min\limits_{j=1,\dots, d} E(j, h_j(x), r)$.
\changed{ While the Exponential Histogram counters estimate the counts within a multiplicative error, their combination with the Count-Min sketch changes the error guarantee to additive.
}

\changed{
An alternative approach for these interval queries was proposed in~\cite{dallachiesa2013identifying}. Their solution uses hCount~\cite{jin2003dynamically}, a sketch algorithm which is essentially identical to the Count-Min sketch. Unlike the ECM-Sketch, which uses a matrix of Exponential Histograms, \cite{dallachiesa2013identifying} uses a sequence of $\log (W/b)$ \emph{buckets} each of which is associated with an hCount instance. The smallest bucket is of size $b$ while the size of the $i$'th bucket is $b\cdot 2^{i-1}$. When queried, \cite{dallachiesa2013identifying} finds the buckets closest to the interval and queries the hCount instances. The paper does not provide any formal accuracy guarantees but shows that it has reasonable accuracy in practice. It seems that the memory used is $O(\log(W/b)\cdot \eps^{-1}\log\delta^{-1}\log W)$ bits while the actual error has two components: (i) an error of up to $b+W/4$ in the \emph{time axis} (when the queried interval is not fully aligned with the buckets); and (ii) an error of up to $W\eps$, with probability $1-\delta$, due to the hCount \mbox{instance used for the queried buckets.}
}

In \emph{other} domains, ad-hoc window queries were proposed and investigated. That is, the algorithm assumes a predetermined upper bound on the window size $W$, and the user could specify the actual window size $w\le W$ at query time. This model was studied for quantiles~\cite{Lin2004} and summing~\cite{slidingRanker}.

The problem of identifying the frequent items in a data stream, known as \emph{heavy hitters}, dates back to the 80's~\cite{MG}. There, Misra and Gries (MG) proposed a space optimal algorithm for computing an $N\epsilon$ additive approximation for the frequency of elements in an $N$-sized stream. Their algorithm had a runtime of $O(\log \oneOverE)$, which was improved to a constant~\cite{Freq1,BatchDecrement}. Later, the Space Saving (SS) algorithm was proposed~\cite{SpaceSavings} and shown to be empirically superior to prior art~(see also~\cite{SpaceSavingIsTheBest2010,SpaceSavingIsTheBest2011}).
Surprisingly, Agarwal et al. recently showed that MG and SS are isomorphic~\cite{Mergeable}, in the sense that from a $k$-counters MG data structure one can compute the estimate that a $k+1$ SS algorithm would produce.

The problem of \emph{hierarchical heavy hitters}, which has important security and anomaly detection applications~\cite{HHHMitzenmacher}, was previously addressed with the SS algorithm~\cite{HHHMitzenmacher}.
To estimate the number of packets that originate from a specific network (rather than a single IP source), it maintains several separate SS instances, each dedicated to measuring different network sizes (e.g., networks with 2-bytes net ids are tracked separately than those with 3-bytes, etc.).
When a packet arrives, all possible prefixes are computed and each is fed into the relevant SS~instance.
Recently, it was shown that randomization techniques can drive the update complexity down to a constant~\cite{MASCOTS,HHH-SIGCOMM}.

\section{Preliminaries}
\label{sec:prelim}

Given a \emph{universe} $\mathcal U$, a stream $\mathcal S = x_1,x_2,\ldots\in \mathcal U^*$ is a sequence of universe elements.
We denote by $W\in\mathbb N$ the \emph{maximal window size}; that is, we consider algorithms that answer queries for an interval contained with the last $W$ elements window.
\changed{The actual value of $W$ is application dependent. For example, a network operator that wishes to monitor up to a minute of traffic of a major backbone link may need $W$ of tens of millions of packets~\cite{CAIDACH16}}.
Given an element $x\in\mathcal U$ and an integer $0\le w\le W$, the \emph{$w$-frequency}, denoted $\xWindowFrequency$,  is the number of times $x$ appears within the last $w$ elements of $\mathcal S$.
For integers $i\le j\le W$, we further denote by $\xIntervalFrequency\triangleq \xWindowFrequency[j] - \xWindowFrequency[i]$ the frequency of $x$ between the $i^{th}$ and $j^{th}$ \mbox{most recent elements of $\mathcal S$.}


We seek algorithms that support the following operations:
\begin{itemize}
	\item {\sc \textbf{ADD}$\bm{(x)}$}: given an element $x\in\mathcal U$, append $x$ to $\mathcal S$.
	\item \BIFQ{}$\bm{(x,i,j)}$: given $x\in\mathcal U$ and indices $i\le j\le W$, return an estimate $\xIntervalFrequencyEstimator$ of $\xIntervalFrequency$.
\end{itemize}

We now formalize the required guarantees.
\begin{definition}
An algorithm solves \IFProblem{} if given \mbox{any \IFQ{}$(x,i,j)$ it satisfies} \IFQGuarantee
\end{definition}
For simplicity of presentation, we assume that $\epsError/12$ and $\oneOverE$ are integers.
For ease of reference, Table~\ref{tbl:notations} includes a summary of basic notations used in this work.

\begin{table}[t]
	\centering
	\scriptsize
	\begin{tabular}{|c|l|}
		\hline
		Symbol & Meaning \tabularnewline
		\hline
		$\mathcal S$ & the data stream \tabularnewline
		\hline
		$\mathcal U$ & the universe of elements \tabularnewline
		\hline
		$W$ & the maximal window size \tabularnewline
		\hline
		\xWindowFrequency & \parbox[t]{2.5in}{ the frequency of element $x$ within the last $w$ elements of $\mathcal S$\strut} \tabularnewline
		\hline
		\xWindowFrequencyEstimator & an estimation of \xWindowFrequency \tabularnewline
		\hline
		\xIntervalFrequency & \parbox[t]{2.5in}{ the frequency of element $x$ between the $i^{th}$ and $j^{th}$ most recent elements of $\mathcal S$\strut} \tabularnewline
		\hline
		\xIntervalFrequencyEstimator & an estimation of \xIntervalFrequency \tabularnewline
		\hline		
		$\epsilon$ & estimation accuracy parameter \tabularnewline
		\hline
		$\delta$ & probability of failure \tabularnewline
		\hline
        n & number of blocks in a frame ($6/\epsilon$) \tabularnewline
        \hline
        N & max sum of blocks' cardinalities ($12/\epsilon$) within a window \tabularnewline
        \hline
\ifdefined\EXTENDED				
		\IntervalHH & the interval's heavy hitters -- \IHHDef\tabularnewline
		\hline		
		\IntervalHHEstimator & estimation of the heavy hitters set \tabularnewline
		\hline		
\fi		
	\end{tabular}
	\caption{List of Symbols}
	\label{tbl:notations}
	\vspace{-2em}
\end{table}
\normalsize

\textbf{Space Saving:} as we use the Space Saving (SS) algorithm~\cite{SpaceSavings} in our reduction in Section~\ref{sec:reduction}, we overview it here. SS maintains a set of $1/\epsilon$ counters, each has an associated element and a value. When an item arrives, SS first checks if it has a counter. If so, the counter is incremented; otherwise, SS allocates the item with a minimal-valued counter. For example, assume that the smallest counter was associated with $x$ and had a value of $4$; if $y$ arrives and has no counter, it will take over $x$'s counter and increment its value to $5$ (leaving $x$ without a counter). When queried for the frequency of a flow, we return the value of its counter if it has one, or the minimal counter's value otherwise. If we denote the overall number of insertions by $Z$, then we have that the sum of counters equals $Z$, and the minimal counter is at most $Z\epsilon$. This ensures that the error in the SS estimate is at most $Z\epsilon$. An important observation is that once a counter reached a value of $Z\epsilon$ it is no longer the minimum throughout the rest of the measurement.
\section{Strawman Algorithm}
\label{windows-sizes}
\ifdefined\NINEPAGES
\begin{figure*}[t]
\else
\begin{figure*}[t]
\fi
	\centering
\ifdefined\NINEPAGES
	\includegraphics[width=0.8\linewidth]{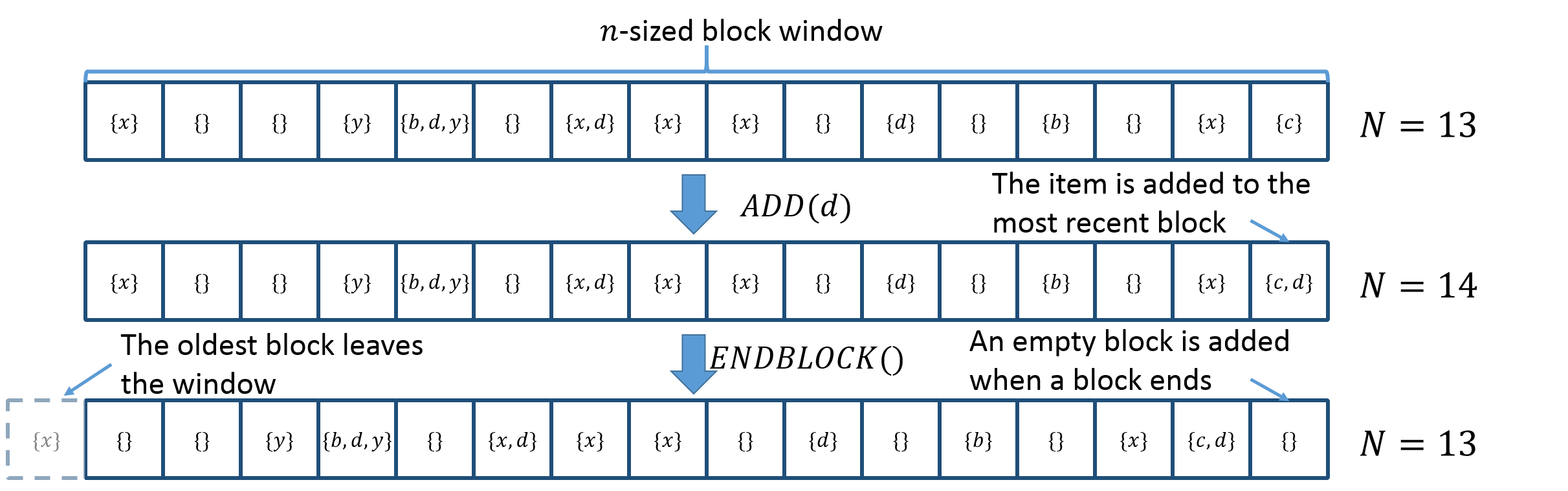}
\else
	\includegraphics[width=0.8\linewidth]{setStreamIllustration.png}
\fi
	\caption{The block stream setting. Here, after the {\sc EndBlock}, $x$ appears in two blocks out of the last $9$ and thus $\xSetWindowFrequency[9]=2$.}
	\label{fig:SetStream}
	\vspace{-0.3em}
	\ifdefined\NINEPAGES	
	\fi	
\ifdefined\NINEPAGES
\end{figure*}
\else
\end{figure*}
\fi
Here, we present the simple Redundant Approximate Windows (RAW) algorithm that uses several instances of a black box algorithm $\mathbb A(w,\eps)$ for solving the frequency estimation problem over a fixed $W$-sized window.
That is, we assume that $\mathbb A(w,\eps)$ supports the {\sc ADD$(x)$} operation and upon {\sc Query$(x)$} produces an estimation $\xWindowFrequencyEstimator$ that satisfies:
\ifdefined\EXTENDED
$$\xWindowFrequency \le \xWindowFrequencyEstimator \le \xWindowFrequency + \epsError[w].$$
\else
$\xWindowFrequency \le \xWindowFrequencyEstimator \le \xWindowFrequency + \epsError[w].$
\fi
We note that the WCSS algorithm~\cite{WCSS} solves this problem using $O(\oneOverE)$ counters and in $O(1)$ time for updates and queries. Both its runtime and space are optimal.\footnote{A lower bound of matching asymptotic complexity appears in~\cite{BatchDecrement}, even for non-window solutions.}

Specifically, we maintain $4\oneOverE$ separate solutions denoted $A_1,\ldots A_{4\oneOverE}$, where each $A_\ell$ is an $\mathbb A(\ell\cdot\epsError/4,\eps/4)$ instance.
We perform the {\sc ADD}$(x)$ operation simply by invoking the operation $A_\ell.${\sc ADD}$(x)$ for $\ell=1,\ldots,4\epsilon^{-1}$.
When given an \IFQ{}$(x,i,j)$, we return
\ifdefined\arXiv
\begin{align}
\xIntervalFrequencyEstimator \triangleq A_{\ceil{j/(\epsError/4)}}.\mbox{{\sc Query$(x)$}} - A_{\floor{i/(\epsError/4)}}.\mbox{{\sc Query$(x)$}} +\epsError/4.\label{RAWEstimator}
\end{align}
\else
\begin{multline*}
\xIntervalFrequencyEstimator \triangleq A_{\ceil{j/(\epsError/4)}}.\mbox{{\sc Query$(x)$}} \\- A_{\floor{i/(\epsError/4)}}.\mbox{{\sc Query$(x)$}} +\epsError/4.
\end{multline*}
\fi
We now state the correctness of RAW. Due to lack of space, we defer the proof to the full version of the paper~\cite{full-version}.
\ifdefined\NINEPAGES
\begin{theorem}
	Using WCSS as the black box algorithm $\mathbb A$, RAW requires $O(\eps^{-2}(\log W + \log|\mathcal U|))$ bits, performs updates in $O(\oneOverE)$ time and answers queries in constant time.
\end{theorem}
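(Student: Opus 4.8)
The plan is to first prove correctness --- that RAW satisfies \IFProblem{} --- since the resource bounds then follow directly from the corresponding WCSS guarantees. Write $s\triangleq\epsError/4$ for the spacing between consecutive instances, and note $s\in\mathbb N$ because $\epsError/12\in\mathbb N$ by assumption; thus each instance $A_\ell$ tracks a window of size $\ell s$ and, since $\ell\le4\oneOverE$, has additive error at most $\ell s\cdot\eps/4\le s$. Fix a query $(x,i,j)$, set $j'\triangleq\ceil{j/s}$ and $i'\triangleq\floor{i/s}$, and let $w_j\triangleq j's$, $w_i\triangleq i's$ be the corresponding window sizes (treating $A_0$ as an empty window returning $0$ when $i<s$). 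Finally let $q_j,q_i$ denote the estimates returned by $A_{j'},A_{i'}$ on $x$, so that $\xIntervalFrequencyEstimator = q_j - q_i + s$.

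The first step is the \emph{sandwiching} observation: rounding an index to a multiple of $s$ moves it by less than $s$, so $w_i\le i\le j\le w_j$ with $0\le w_j-j<s$ and $0\le i-w_i<s$; moreover $j\le W$ gives $j'\le\ceil{W/s}=4\oneOverE$, so $A_{j'}$ indeed exists. Since the $w$-frequency is nondecreasing in $w$ and grows by at most one per additional element, this yields $\xWindowFrequency[j]\le\xWindowFrequency[w_j]\le\xWindowFrequency[j]+(w_j-j)$ and $\xWindowFrequency[w_i]\le\xWindowFrequency[i]\le\xWindowFrequency[w_i]+(i-w_i)$. The heart of the argument is to combine these with the black-box bounds $\xWindowFrequency[w_j]\le q_j\le\xWindowFrequency[w_j]+s$ and $\xWindowFrequency[w_i]\le q_i\le\xWindowFrequency[w_i]+s$ together with the built-in $+s$ correction. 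For the lower bound, discard the error on $q_j$ and let the $+s$ absorb the error on $q_i$:
\begin{align*}
\xIntervalFrequencyEstimator = q_j - q_i + s \ge \xWindowFrequency[w_j] - \parentheses{\xWindowFrequency[w_i]+s} + s \ge \xWindowFrequency[j]-\xWindowFrequency[i] = \xIntervalFrequency.
\end{align*}
For the upper bound, exactly four slack terms accumulate, each below $s$:
\begin{align*}
\xIntervalFrequencyEstimator \le \parentheses{\xWindowFrequency[j]+(w_j-j)+s} - \parentheses{\xWindowFrequency[i]-(i-w_i)} + s < \xIntervalFrequency + 4s = \xIntervalFrequency+\epsError.
\end{align*}
Together these give $\xIntervalFrequency\le\xIntervalFrequencyEstimator\le\xIntervalFrequency+\epsError$, so RAW solves \IFProblem{}.

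The resource bounds follow immediately. RAW stores $4\oneOverE$ WCSS instances, each holding $O(\oneOverE)$ counters that keep an id ($O(\log|\mathcal U|)$ bits) and a count bounded by $W$ ($O(\logw)$ bits), for $O\!\parentheses{\eps^{-2}(\logw+\log|\mathcal U|)}$ bits overall. {\sc ADD} forwards $x$ to all $4\oneOverE$ instances, each updating in $O(1)$ time by the WCSS guarantee, for $O(\oneOverE)$ per update; a query computes two indices, issues two $O(1)$ point queries, and performs constant arithmetic, hence $O(1)$ time. I expect the delicate point to be the error accounting above: one must verify that the two rounding gaps $w_j-j$ and $i-w_i$, the estimation error retained on $q_j$, and the correction term are \emph{simultaneously} at most $s$. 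This is exactly why the instances are spaced by $\epsError/4$ and run at accuracy $\eps/4$ --- the factor $4$ is apportioned among these four contributions so that their total meets the target error $\epsError$.
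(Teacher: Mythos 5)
Your proposal is correct and takes essentially the same route as the paper's proof: query the instances indexed by $\ceil{j/(\epsError/4)}$ and $\floor{i/(\epsError/4)}$, bound the two rounding gaps and the two black-box estimation errors by $\epsError/4$ each, and let the built-in $+\epsError/4$ correction secure the lower bound, so the four slack terms sum to the target error $\epsError$. The paper merely organizes the identical accounting through an explicit decomposition identity for $\xIntervalFrequency$ and states the space/time bounds for a generic black box $\mathbb A$ before specializing to WCSS, exactly as your final paragraph does.
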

\else
Next, we analyze the properties of RAW.
\begin{theorem}
Let $\mathbb A$ be a black box algorithm as above that uses $S(w,\eps)$ space and runs at $U(w,\eps)$ time for updates and $Q(w,\eps)$ time for queries.
Then RAW requires $O(\oneOverE S(w,\eps))$ space, performs updates in $O(\oneOverE U(w,\eps))$ time, and answers queries in $O(Q(w,\eps))$ time.
Further, RAW solves the \IFProblem{} problem.
\end{theorem}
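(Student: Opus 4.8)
The plan is to prove the two claims separately: first the accuracy guarantee (that RAW solves \IFProblem{}), then the resource bounds. For accuracy, I would introduce the granularity $\Delta \triangleq \epsError/4$, so that $A_\ell$ maintains a window of size $\ell\Delta$ with accuracy parameter $\eps/4$; in particular its estimate of $\xWindowFrequency[\ell\Delta]$ has one-sided error lying in $[0,\ \ell\Delta\cdot\eps/4]$. A query \IFQ{}$(x,i,j)$ consults $A_{\ceil{j/\Delta}}$, whose window $\overline\jmath\triangleq\ceil{j/\Delta}\cdot\Delta$ satisfies $j\le\overline\jmath<j+\Delta$, and $A_{\floor{i/\Delta}}$, whose window $\underline\imath\triangleq\floor{i/\Delta}\cdot\Delta$ satisfies $\underline\imath\le i<\underline\imath+\Delta$. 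Since $j\le W$ and $\oneOverE$ is an integer, $\ceil{j/\Delta}\le 4\oneOverE$, so both instances exist; the boundary case $\floor{i/\Delta}=0$ is handled by reading the empty-window estimate as $0$.

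The heart of the correctness argument is to sandwich the estimator. First I would use the non-decreasing monotonicity of $w\mapsto\xWindowFrequency$ together with $\overline\jmath\ge j$, $\underline\imath\le i$, and the fact that enlarging a window from $w$ to $w'\ge w$ adds at most $w'-w$ occurrences of $x$, to show $\xIntervalFrequency \le \xWindowFrequency[\overline\jmath]-\xWindowFrequency[\underline\imath] \le \xIntervalFrequency + (\overline\jmath-j)+(i-\underline\imath) < \xIntervalFrequency+2\Delta$. Substituting the black-box guarantees into $\xIntervalFrequencyEstimator = \widehat{\xWindowFrequency[\overline\jmath]}-\widehat{\xWindowFrequency[\underline\imath]}+\Delta$, the low side gives $\xIntervalFrequencyEstimator \ge (\xWindowFrequency[\overline\jmath]-\xWindowFrequency[\underline\imath]) - \underline\imath\cdot\eps/4 + \Delta \ge \xIntervalFrequency$, where the last step uses $\underline\imath\cdot\eps/4\le W\eps/4=\Delta$ so that the $+\Delta$ offset absorbs the error of the subtracted instance. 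The high side gives $\xIntervalFrequencyEstimator \le (\xWindowFrequency[\overline\jmath]-\xWindowFrequency[\underline\imath]) + \overline\jmath\cdot\eps/4 + \Delta < \xIntervalFrequency + 2\Delta + \Delta + \Delta = \xIntervalFrequency + 4\Delta = \xIntervalFrequency + \epsError$, again using $\overline\jmath\le W$. Together these yield exactly $\xIntervalFrequency \le \xIntervalFrequencyEstimator \le \xIntervalFrequency + \epsError$, the \IFProblem{} guarantee.

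For the resource bounds I would simply count: RAW stores $4\oneOverE = O(\oneOverE)$ instances, each of window at most $W$ and accuracy $\eps/4=\Theta(\eps)$, hence at most $O(S(w,\eps))$ space apiece, for $O(\oneOverE S(w,\eps))$ in total; {\sc ADD} forwards the element to every instance, giving $O(\oneOverE U(w,\eps))$ update time; and a query reads exactly two instances plus $O(1)$ arithmetic, giving $O(Q(w,\eps))$ query time.

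I expect the main obstacle to be the sign bookkeeping in the sandwich step — making sure the one-sided, always-positive black-box error interacts correctly with the two rounding directions (round the endpoint $j$ up and the endpoint $i$ down). The delicate point is that the single additive offset $\Delta$ must simultaneously cancel the error of the subtracted instance to keep the lower bound valid, yet not by itself push the upper bound past $\epsError$; verifying that the rounding slack ($<2\Delta$), the added instance's error ($\le\Delta$), and the offset ($\Delta$) telescope to exactly $4\Delta=\epsError$ is precisely where the choice of $4\oneOverE$ instances and accuracy $\eps/4$ is pinned down.
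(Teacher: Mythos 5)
Your proposal is correct and follows essentially the same route as the paper's own proof: round $j$ up and $i$ down to multiples of $\Delta=\epsError/4$, bound the rounding slack by $2\Delta$ via monotonicity, apply the one-sided black-box guarantees (error at most $\Delta$ per instance) so that the $+\Delta$ offset preserves the lower bound, and sum the budget to $4\Delta=\epsError$, with the identical counting argument for space and time. Your treatment is slightly more careful than the paper's in two harmless respects — you track the exact per-instance error $\ell\Delta\cdot\eps/4$ before relaxing it to $\Delta$, and you explicitly handle the boundary cases $\floor{i/\Delta}=0$ and $\ceil{j/\Delta}\le 4\oneOverE$ — but the argument is the same.
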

\begin{proof}
The run times above follows immediately from the fact that RAW utilizes $O(\oneOverE)$ instances of $\mathbb A(\cdot, \eps/4)$, updates each of them when processing elements, and queries only two instances per interval query.
Next, we will prove the correctness of RAW.

Notice that we can express the interval frequency as:
\ifdefined\arXiv
\begin{align}
\xIntervalFrequency &= \xWindowFrequency[j] - \xWindowFrequency[i]
= \xWindowFrequency[\ceil{j/(\epsError/4)}\cdot\epsError/4] - \xWindowFrequency[\floor{i/(\epsError/4)}\cdot\epsError/4]
 - \xParameterizedIntervalFrequency{j}{\ceil{j/(\epsError/4)}\cdot\epsError/4}
 - \xParameterizedIntervalFrequency{\floor{i/(\epsError/4)}\cdot\epsError/4}{i}.\label{eq1}
\end{align}
\else
\begin{multline}
\xIntervalFrequency = \xWindowFrequency[j] - \xWindowFrequency[i]\\
= \xWindowFrequency[\ceil{j/(\epsError/4)}\cdot\epsError/4] - \xWindowFrequency[\floor{i/(\epsError/4)}\cdot\epsError/4]\\
 - \xParameterizedIntervalFrequency{\ceil{j/(\epsError/4)}\cdot\epsError/4}{i}
 - \xParameterizedIntervalFrequency{i}{\floor{i/(\epsError/4)}\cdot\epsError/4}
.\label{eq1}
\end{multline}
\fi
Next, we note that
$$\forall n,d\in\mathbb N: 0\le|n-d\floor{n/d}|,|d\ceil{n/d}-n|\le d,$$
and since $\forall a\ge b: 0\le \xParameterizedIntervalFrequency{a}{b} \le b-a$, we have
$$0 \le \xParameterizedIntervalFrequency{j}{\ceil{j/(\epsError/4)}\cdot\epsError/4}, \xParameterizedIntervalFrequency{\floor{i/(\epsError/4)}\cdot\epsError/4}{i} \le \epsError/4.$$ 
Plugging this into \eqref{eq1}, we get
\begin{align*}
\begin{cases}
\xIntervalFrequency \le \xWindowFrequency[\ceil{j/(\epsError/4)}\cdot\epsError/4] - \xWindowFrequency[\floor{i/(\epsError/4)}\cdot\epsError/4]\\
\xIntervalFrequency \ge \xWindowFrequency[\ceil{j/(\epsError/4)}\cdot\epsError/4] - \xWindowFrequency[\floor{i/(\epsError/4)}\cdot\epsError/4] - \epsError/2.
\end{cases}
\end{align*}
Now our estimation in \eqref{RAWEstimator} relies on the estimations produced by $A_{\ceil{i/(\epsError/4)}},A_{\floor{j/(\epsError/4)}}$. By the correctness of $\mathbb A$, we are guaranteed that
\begin{align}
\begin{cases}
A_{\ceil{j/(\epsError/4)}}.\mbox{{\sc Query$(x)$}}\ge \xWindowFrequency[\ceil{j/(\epsError/4)}\cdot\epsError/4]  \\
A_{\ceil{j/(\epsError/4)}}.\mbox{{\sc Query$(x)$}} \le \xWindowFrequency[\ceil{j/(\epsError/4)}\cdot\epsError/4] + \epsError/4\\
A_{\floor{i/(\epsError/4)}}.\mbox{{\sc Query$(x)$}}\ge \xWindowFrequency[\floor{i/(\epsError/4)}\cdot\epsError/4]  \\
A_{\floor{i/(\epsError/4)}}.\mbox{{\sc Query$(x)$}} \le \xWindowFrequency[\floor{i/(\epsError/4)}\cdot\epsError/4] + \epsError/4.
\end{cases}\label{eq2}
\end{align}
Combining \eqref{eq2} with \eqref{RAWEstimator} we establish
\ifdefined\arXiv
\begin{multline}
\xIntervalFrequencyEstimator \ge (\xWindowFrequency[\ceil{j/(\epsError/4)}\cdot\epsError/4]) - (\xWindowFrequency[\floor{i/(\epsError/4)}\cdot\epsError/4]+\epsError/4) +\epsError/4
= \xWindowFrequency[\ceil{j/(\epsError/4)}\cdot\epsError/4] - \xWindowFrequency[\floor{i/(\epsError/4)}\cdot\epsError/4].\label{eq4}
\end{multline}
\else
\begin{multline}
\xIntervalFrequencyEstimator \ge (\xWindowFrequency[\ceil{j/(\epsError/4)}\cdot\epsError/4]) \\- (\xWindowFrequency[\floor{i/(\epsError/4)}\cdot\epsError/4]+\epsError/4) +\epsError/4
\\= \xWindowFrequency[\ceil{j/(\epsError/4)}\cdot\epsError/4] - \xWindowFrequency[\floor{i/(\epsError/4)}\cdot\epsError/4].\label{eq4}
\end{multline}
\fi
Similarly,
\ifdefined\arXiv
\begin{multline}
 \xIntervalFrequencyEstimator \le (\xWindowFrequency[\ceil{j/(\epsError/4)}\cdot\epsError/4] + \epsError/4) - (\xWindowFrequency[\floor{i/(\epsError/4)}\cdot\epsError/4]) +\epsError/4,
 = \xWindowFrequency[\ceil{j/(\epsError/4)}\cdot\epsError/4] - \xWindowFrequency[\floor{i/(\epsError/4)}\cdot\epsError/4]  + \epsError/2.\label{eq5}
\end{multline}
\else
\begin{multline}
 \xIntervalFrequencyEstimator \le (\xWindowFrequency[\ceil{j/(\epsError/4)}\cdot\epsError/4] + \epsError/4) \\- (\xWindowFrequency[\floor{i/(\epsError/4)}\cdot\epsError/4]) +\epsError/4,
 \\= \xWindowFrequency[\ceil{j/(\epsError/4)}\cdot\epsError/4] - \xWindowFrequency[\floor{i/(\epsError/4)}\cdot\epsError/4]  + \epsError/2.\label{eq5}
\end{multline}
\fi
Finally, we substitute \eqref{eq4} and \eqref{eq5} in \eqref{eq1} to obtain the desired \IFQGuarantee[\qedhere]
\end{proof}

While RAW does not assume anything about $\mathbb A$, WCSS was shown to be asymptotically optimal both in terms of runtime and memory~\cite{WCSS}.
Thus, obtaining an improved fixed-window algorithm can only allow constant factor reductions in time and space.
Also, while $\mathbb A$'s error is proportional to the window size (i.e., the error in the estimation of $A_\ell$ is at most $\ell\cdot\epsError/4$, which may be smaller than the $\epsError/4$ we used in the analysis), optimizing the error for each individual instance does not reduce the space by more than 50\%.
In the next section, we propose novel techniques to asymptotically reduce both space and update time.
Taking into account that every counter consists of an $O(\log|\mathcal U|)$ bits identifier and an $O(\log W)$ bits value, we conclude the following:
\begin{corollary}
Using WCSS as the black box algorithm $\mathbb A$, RAW requires $O(\eps^{-2}(\log W + \log|\mathcal U|))$ bits, performs updates in $O(\oneOverE)$ time and answers queries in constant time.
\end{corollary}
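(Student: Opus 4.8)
The plan is to obtain this corollary as a direct instantiation of the preceding theorem with $\mathbb A$ taken to be WCSS. First I would recall the complexity profile of WCSS~\cite{WCSS}: it maintains $O(\oneOverE)$ counters and supports both updates and queries in $O(1)$ worst-case time, so in the theorem's notation $U(w,\eps)=Q(w,\eps)=O(1)$. Since WCSS satisfies exactly the fixed-window black-box guarantee $\xWindowFrequency \le \xWindowFrequencyEstimator \le \xWindowFrequency + \epsError[w]$ required by the theorem, all of the theorem's conclusions—including that RAW solves \IFProblem{}—apply verbatim, and correctness is therefore inherited with no further argument.

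Next I would pin down the bit complexity $S(w,\eps)$ of a single WCSS instance, which the theorem states only in terms of an abstract space function. Each of its $O(\oneOverE)$ counters stores an element identifier, costing $O(\log|\mathcal U|)$ bits, together with an integer value that never exceeds the window size and hence fits in $O(\log W)$ bits. Here I would note that although the instances $A_1,\dots,A_{4\oneOverE}$ use windows of different sizes $\ell\cdot\epsError/4$, every such window is capped by $W$ (the largest, $\ell=4\oneOverE$, gives exactly $W$), so $\log(\ell\cdot\epsError/4)\le\logw$ bounds all counter values uniformly. This yields $S(w,\eps)=O\!\left(\oneOverE(\log|\mathcal U|+\logw)\right)$ bits across every instance.

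Substituting these three quantities into the theorem's bounds then finishes the argument. The space becomes $O(\oneOverE)\cdot O\!\left(\oneOverE(\log|\mathcal U|+\logw)\right)=O(\eps^{-2}(\logw+\log|\mathcal U|))$ bits; the update time becomes $O(\oneOverE)\cdot O(1)=O(\oneOverE)$; and the query time remains $O(Q(w,\eps))=O(1)$, matching the claimed asymptotics.

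I do not expect a genuine obstacle here, as the corollary is essentially a plug-in of WCSS's known parameters. The only step meriting a moment's care is the bit-accounting, specifically justifying the uniform $O(\logw)$ bound on counter values over the $O(\oneOverE)$ instances of differing window sizes; this is precisely where one must observe that each window is at most $W$, after which the estimate collapses cleanly.
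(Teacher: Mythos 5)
Your proposal is correct and matches the paper's own argument: the paper likewise obtains this corollary by instantiating the black-box theorem with WCSS, using its $O(\oneOverE)$ counters with $O(1)$ update/query time, and accounting $O(\log|\mathcal U|)$ bits per identifier plus $O(\log W)$ bits per counter value to get $S(w,\eps)=O(\oneOverE(\log W+\log|\mathcal U|))$ and hence the stated bounds. Your extra care in noting that all instances' windows are capped by $W$ (so counter values fit uniformly in $O(\log W)$ bits) is a reasonable elaboration of what the paper states implicitly.
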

\fi

\section{Block Interval Frequency}
\label{sec:block-interval}
In this section, we formally define an auxiliary problem, nicknamed \SIProblem{},
show a reduction to the ${\problemParams}${-}
{IntervalFrequency} problem, and rigorously analyze the reduction's cost.
Our motivation lies in the fact that the suggested algorithms in Section~\ref{sec:improved_algos} both intrinsically solve the \SIProblem{} auxiliary problem. 
It also has the benefit that any improved reduction between these problems would improve both algorithms.
In \SIProblem{}, the arriving elements are inserted into $O(W\epsilon)\text{-sized}$ ``blocks'' and we are required to compute exact interval frequencies \emph{within the blocks}.
Doing so simplifies the presentation and analysis of the algorithms in Section~\ref{sec:improved_algos}, in which we propose algorithms that improve over RAW in both space and update time.
The two algorithms, $HIT$ and $ACC_k$ present a space-time tradeoff while achieving asymptotic reductions over RAW.

\subsection{The Block Interval Frequency Problem}
\label{block_interval_definition}
Here, instead of frequency, we consider items' \emph{block frequency}.
Namely, for some $x\in\mathcal U$, we define its window block frequency $\xSetWindowFrequency$ as the number of blocks $x$ appears in within the last $n$ blocks.
For integers $i\le j\le n$, we define $\xSetIntervalFrequency\triangleq \xSetWindowFrequency[j] - \xSetWindowFrequency[i]$.
Block algorithms support three operations:
\begin{itemize}
	\item {\sc \textbf{ADD}$\bm{(x)}$}: given an element $x\in\mathcal U$, add it to~the~stream.
	\item {\sc \textbf{ENDBLOCK}$\bm{()}$}: a new empty block is inserted into the window, and the oldest one leaves.
	\item \BSIQ{}$\bm{(x,i,j)}$: given an element $x\in\mathcal U$ and indices $i\le j\le n$, compute $\xSetIntervalFrequency$ (without error).
\end{itemize}
We define the \SIProblem{} as ${(W,\eps=0)}${-}{\IFText}. That is, we say that an algorithm solves the \SIProblem{} problem if given an \SIQ{}${(x,i,j)}$ it is able to compute the \emph{exact} answer for any $i\le j\le n$ and $x\in\mathcal U$.

For analyzing the memory requirements of algorithms solving this problem, we denote by $N$ the sum of cardinalities of the blocks in the $n$-sized window.
An example of this {setting is given in Figure~\ref{fig:SetStream}.}

\subsection{A Reduction to \IFProblem{}}
\label{sec:reduction}
We 
show a reduction from the \SIProblem{} problem to \IFProblem{}.
To that end, we assume that $\mathcal A$ is an algorithm that solves the $n$-\SIText{} problem for~$n\triangleq 6\oneOverE$.

Our reduction relies on the observation that by applying such $\mathcal A$ on a data structure maintained by counter-based algorithm such as Space Saving~\cite{SpaceSavings}, we can compute interval queries and not only fixed window size frequency estimations.
The setup of the reduction is illustrated in Figure~\ref{fig:window-counting}.
\begin{figure}[]
	\centering
	\ifdefined\arXiv
	\includegraphics[width=0.5\linewidth]{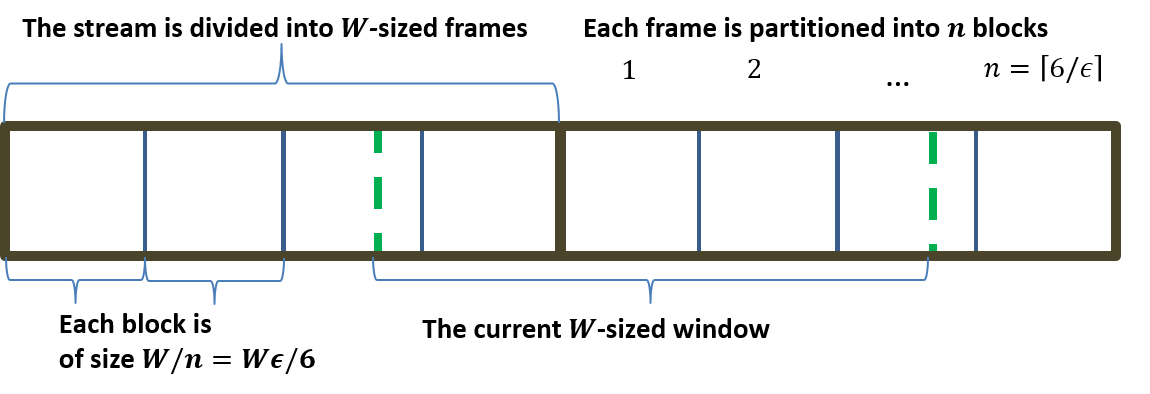}
	\else
	\includegraphics[width=0.9\linewidth]{sliding-window-setup.png}
	\fi
	\caption{The stream is logically divided into intervals of size $\window$ called \emph{frames} and each frame is logically partitioned into $\numBlocks$ equal-sized \emph{blocks}. The window of interest is also of size $\window$, and overlaps with at most $2$ frames and $\numBlocks+1$ blocks.
	\ifdefined\VLDB
	\vspace{-4mm}
	\fi
	}
	\label{fig:window-counting}
\end{figure}
We break the stream into $W$ sized \emph{frames}, which are further divided into \emph{blocks} of size $O(W\eps)$.
We employ a Space Saving~\cite{SpaceSavings,WCSS} instance to track element frequencies within each frame; \changed{it supports two methods: \mbox{{\sc Add$(x)$}} -- adds element $x$ to the stream and \mbox{{\sc Query$(x)$}} -- reports the frequency estimation of element $x$ with tight guarantees on the error}.

Whenever a counter reaches an integer multiple of the block size, we add its associated flow's identifier to the most recent block of $\mathcal A$.
When a frame ends, we \emph{flush} the Space Saving instance and reset all of its counters.
We note that an implementation that supports constant time flush operations was suggested in~\cite{WCSS}.
Also, the max sum of block's cardinalities within a window (overlapping up to 2 frames) is $N=12/\epsilon$.
Finally, we reduce each \IFQ{} to an \SIQ{} by computing the indices of the blocks in which the interval starts and ends.
The variables of the reduction algorithm are described in Table~\ref{tbl:reductionVars} and its pseudocode appears in Algorithm~\ref{alg:reduction}.

\begin{table}[]

	\scriptsize
	\centering
	\begin{tabular}{|c|p{6.6cm}|}
		\hline
		\frameOffset & the offset within the current frame.
		\tabularnewline
		\hline
		$\mathcal A$ & an algorithm that solves $(6/\eps)$-\SIText{}.
		\tabularnewline
		\hline
		\SSInstance{} & a Space Saving instance with $\ceil{6\oneOverE}$ counters.
		\tabularnewline
		\hline
		\blockSize{} & the size of blocks (fixed at $\blockSize\triangleq W\eps/6$).
		\tabularnewline
		\hline		
	\end{tabular}
	\normalsize
	\caption{Variables used by the Algorithm~\ref{alg:reduction}.}
	\label{tbl:reductionVars}
	\ifdefined\VLDB
	\vspace{-1.4em}
	\fi
\end{table}

\ifdefined\NINEPAGES
\vspace*{-3mm}
\fi
\ifdefined\NINEPAGES
\setlength{\textfloatsep}{4pt}
\fi
\begin{algorithm}[H]

    \caption{From Blocks to Approximate Frequencies}\label{alg:reduction}
	\scriptsize
	\begin{algorithmic}[1]
		\Statex		Initialization: $
		\frameOffset\gets 0, \blockSize\triangleq W\eps/6,\quad \mbox{\textit{initialize}}\quad \mathcal A, \SSInstance(\eps/6).$
		\Function {Add}{$x$}
			\State $\frameOffset\gets (\frameOffset + 1)\mod W$
			\State $\SSInstance.Add(x)$\label{line:ssAdd}
			\If {$\SSInstance.Query(x) \mod \blockSize = 0$}\label{line:SSoverflow}
				\State $\mathcal A.Add(x)$\label{line:addToSet}
			\EndIf
			\If {$\frameOffset\mod \blockSize = 0$}
				\State $\mathcal A.EndBlock()$
			\EndIf
			\If {$\frameOffset = 0$}
				\State $\SSInstance.Flush()$
			\EndIf			
		\EndFunction	
		\Function {\IFQ{}}{$x,i,j$}
			\State \Return $\blockSize\cdotpa{\mathcal A.\SIQ(x,\ceil{i/\blockSize},\floor{j/\blockSize})+2}$\label{line:reductionQuery}
		\EndFunction
	\end{algorithmic}
	\normalsize
\end{algorithm}
\ifdefined\NINEPAGES
\setlength{\textfloatsep}{14pt}
\fi

\subsection{Theoretical Analysis}


Given a query \IFQ{}$(x,i,j)$, we are required to estimate $\xIntervalFrequency= \xWindowFrequency[j] - \xWindowFrequency[i]$.
Our estimator is $\xIntervalFrequencyEstimator= \mathcal A.\SIQ(x,\ceil{i/(W\eps/6)},\floor{j/(W\eps/6)})+W\eps/3$.
Intuitively, we query $\mathcal A$ for the block frequency of $x$ in the minimal sequence of blocks that contain interval $i,j$.
Every time $x$'s counter reaches an integer multiple of the block size, the condition in Line~\ref{line:SSoverflow} is satisfied and the block frequency of $x$, as tracked by $\mathcal A$, increases by $1$.
Thus, multiplying the block frequency by $s\triangleq W\eps/6$ allows us to approximate $x$'s frequency in the original stream.

There are several sources of estimation error:
First, we do not have a counter for each element but rather a Space Saving instance in which counters are shared.
Next, unless the counter of an item reaches an integer multiple of $\blockSize$, we do not add it to the block stream.
Additionally, the queried interval might not be aligned with the blocks.
Finally, when a frame ends, we flush the counters and thus lose the frequency counts of elements that are not recorded in the block stream.
With these sources of error in mind, we prove the correctness of our algorithm.
\begin{theorem}
\vspace{-1mm}
\label{theorem:block_size}
	Let $\mathcal A$ be an algorithm for the $6\oneOverE$-\SIText{} problem. Then Algorithm~\ref{alg:reduction} solves \IFProblem{}.
\end{theorem}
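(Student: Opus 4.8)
The plan is to verify the additive guarantee $\xIntervalFrequency \le \xIntervalFrequencyEstimator \le \xIntervalFrequency + \epsError$ for the value returned on Line~\ref{line:reductionQuery}, which equals $\blockSize\cdot(\xSetIntervalFrequency + 2)$ where $\xSetIntervalFrequency$ is the \emph{exact} block-interval frequency reported by $\mathcal A$ over the block range with indices $\ceil{i/\blockSize}$ and $\floor{j/\blockSize}$. Since $\blockSize = \weps/6$, we have $\epsError = 6\blockSize$ and the $+2$ contributes exactly $2\blockSize$, so I would reduce the theorem to the two one-sided numerical claims $\xIntervalFrequency - 2\blockSize \le \blockSize\cdot\xSetIntervalFrequency \le \xIntervalFrequency + 4\blockSize$. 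Here I may treat $\mathcal A.\SIQ$ as returning its answer without error, since $\mathcal A$ is assumed to solve $6\oneOverE$-\SIText{} exactly.

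First I would establish the structural correspondence between block ``ticks'' and the Space Saving counter. Within a single frame at most $W$ elements are inserted into $\SSInstance$, which holds $\ceil{6\oneOverE}$ counters, so its estimate for $x$ overshoots the true in-frame count of $x$ by a nonnegative amount of at most $W\cdot(\eps/6) = \blockSize$. Because $x$ is added to the current block (Line~\ref{line:addToSet}) precisely when its counter crosses a multiple of $\blockSize$, and because a block spans $\blockSize$ stream positions so the counter can grow by at most $\blockSize$ within it, each block receives at most one such tick; hence the number of blocks containing $x$ over any contiguous sub-range of a frame equals the number of multiples of $\blockSize$ swept there. The crucial refinement is the Space Saving invariant recalled in Section~\ref{sec:prelim}: once a counter reaches $\blockSize$ it is never again the minimum, so from then on its over-estimate is \emph{frozen}. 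Consequently, when I form a block-interval \emph{difference} over a sub-range lying in the stable region, the Space Saving slack cancels and only the rounding to whole blocks survives, costing at most one $\blockSize$ per frame.

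Next I would pass from the interval to blocks. The window of interest overlaps at most two consecutive frames and $n+1$ blocks (Figure~\ref{fig:window-counting}), and because $\ceil{i/\blockSize}$ and $\floor{j/\blockSize}$ round the endpoints inward, the queried range is exactly the set of \emph{complete} blocks contained in $\brackets{i,j}$; it therefore omits at most one partial block at each end, i.e.\ fewer than $2\blockSize$ elements, and only in the undershooting direction. I would then split the queried range at its (at most one) interior frame boundary, apply the correspondence lemma to each of the (at most two) frame pieces separately --- legitimately, since the flush makes the pieces independent --- and sum. Adding the per-frame rounding loss, the boundary misalignment, and the nonnegative Space Saving slack, and reinstating the $+2\blockSize$ correction, should yield the two claimed inequalities and hence \IFProblem{}.

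The main obstacle is the lower inequality $\xIntervalFrequency \le \xIntervalFrequencyEstimator$: I must rule out any net \emph{under}-estimate despite three simultaneous downward pressures --- the inward block rounding, the floor in the counter-to-tick correspondence, and the flush that discards any sub-block remainder at a frame boundary. Making this close requires using the frozen-error cancellation to cap the rounding loss at one $\blockSize$ per frame rather than two, bounding the two partial boundary blocks jointly by the $+2\blockSize$ correction, and checking that the specific constants $n = 6\oneOverE$ and $N = 12\oneOverE$ leave precisely enough room inside the $\epsError = 6\blockSize$ budget. By contrast, the upper inequality is comparatively routine, since it only needs the per-frame over-estimate of at most $\blockSize$ accumulated over at most two frames.
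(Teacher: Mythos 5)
Your proposal shares the paper's skeleton: the reduction to the two one-sided bounds $\xIntervalFrequency - 2\blockSize \le \blockSize\cdot\xSetIntervalFrequency \le \xIntervalFrequency + 4\blockSize$, the Space Saving invariant that a counter reaching $\blockSize$ is never reassigned within a frame (so every tick after the first corresponds to exactly $\blockSize$ genuine arrivals), the at-most-two-frames split, and the role of the $+2\blockSize$ correction. Where you part ways with the paper is the \emph{direction} of the block-rounding error, and that is where your argument breaks. The paper charges the endpoint misalignment entirely to the \emph{over}estimation side: its proof reads the query as covering ``the blocks that contain $i$ and $j$,'' a superset of the interval, costing at most $2\blockSize$ of overestimation. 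The only underestimation sources are then the flush loss and the current frame's unrecorded residual, each below $\blockSize$, which the $+2\blockSize$ correction absorbs exactly; both sides then close inside the $6\blockSize = \epsError$ budget.

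You instead read $\ceil{i/\blockSize},\floor{j/\blockSize}$ as inward rounding, so the omitted partial blocks become \emph{under}estimation. Under that reading your downward pressures total up to $4\blockSize$ --- about $2\blockSize$ from the two omitted boundary blocks plus up to $\blockSize$ per frame of residual/flush loss --- against a correction of only $+2\blockSize$, so your lower bound $\xIntervalFrequency \le \xIntervalFrequencyEstimator$ fails by $2\blockSize$. The two devices you invoke cannot fill this hole: the ``frozen-error cancellation'' only shows that the Space Saving slack is frozen and nonnegative, and a possibly-zero overestimate cannot offset a definite worst-case underestimate; and ``bounding the two partial boundary blocks jointly by the $+2\blockSize$ correction'' spends the entire correction there, leaving nothing for the flush and residual losses, which are real and independent of the boundary blocks (consider $x$ with $\blockSize-1$ unrecorded arrivals just before the flush and another $\blockSize-1$ in the current frame). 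To be fair, your inward reading is the literal reading of Line~\ref{line:reductionQuery}; but the paper's proof, and its stated intuition of querying the ``minimal sequence of blocks that contain the interval,'' make clear that the intended semantics is the covering one, and only under that semantics does the $6\blockSize$ accounting --- and hence the theorem --- go through as argued.
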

\vspace{-3mm}
\begin{proof}
We begin by noticing that once an element's counter reaches $\blockSize=W\eps/6$, it will stay associated with the element until the end of the frame.
This follows directly from the Space Saving algorithm, which only disassociates elements whose counter is minimal among all counters(see the SS overview in Section~\ref{sec:prelim}).
Recall that the number of elements in a frame is $W$ and that the Space Saving instance is allocated with $\ceil{6\oneOverE}$ counters.
Since the sum of counters always equals the number of elements processed, any counter that reaches a value of $\blockSize$ will never be minimal.
Thus, once an element was added to a block (Line~\ref{line:addToSet}), its block frequency within the frame is increased by one for every $\blockSize$ subsequent arrivals.
This means that an item might be added to a block while appearing just once in the stream, but this gives an overestimation of at most $\blockSize-1$.
As the queried intervals can overlap with two frames, this can happen at most twice, which imposes an overestimation error of no more than $2\blockSize$.

Our next error source is the fact that the queried interval may begin and end anywhere within a block.
By considering the blocks that contain $i$ and $j$, regardless of their offset, we incur another overestimation error of at most $2\blockSize$.

We have two sources of underestimation error, where items frequency is lower than $\blockSize$ times its block frequency.
The first is the count we lose when flushing the Space Saving instance.
Since we record every multiple of $\blockSize$ in the block stream, a frequency of at most $s-1$ is lost due to the flush.
Second, in the current frame, the residual frequency of an item (i.e., the appearances that have not been recorded in the block stream) may be at most $\blockSize-1$.
We make up for these by adding $2\blockSize$ to the estimation (Line~\ref{line:reductionQuery}).
As we have covered all error sources, the total error is smaller than $6\blockSize\le~W\eps$.
\end{proof}

\vspace{-4mm}\paragraph*{Reducing the Error}
Above, we used a block size of $\blockSize=W\eps/6$, which can be reduced to $W\eps/5$ as follows:
One of the error sources in Theorem~\ref{theorem:block_size} is the fact that the queried interval $i,j$ may begin and end in the middle of a block and we always consider the entire blocks that contain $i$ and $j$.
We can optimize this by considering $i$'s and $j$'s offsets \emph{within} the relevant blocks, and including the block's frequency only if the offset crosses half the size of the block.
This incurs an overestimation error of at most $\blockSize$ instead of $2\blockSize$, allows blocks of size $W\eps/5$ and reduces the number of blocks to $n=5/\epsilon$.

\section{Improved Algorithms}
\label{sec:improved_algos}
%

\subsection{Approximate Cumulative Count ($\mathit{ACC}$)}
\label{sec:acc}
We present a family of algorithms for solving the \SIProblem{} problem.
Approximate Cumulative Count ($\mathit{ACC}$) of level $k$, denoted $\mathit{ACC_k}$, aims to compute the interval frequencies while accessing at most $k$ hash tables for updates and $2k+1$ for queries.
To reduce clutter, we assume in this section that $n^{1/k}\in\mathbb N$; this assumption can be omitted with the necessary adjustments while incurring a $1+o(1)$ multiplicative space overhead.
This family presents a space-time trade off --- the larger $k$ is, $\mathit{ACC_k}$ takes less space but is also slower.

The $\mathit{ACC}$ algorithms break the block stream into consecutive \emph{frames} of size $n$ (the maximal window size).
That is, blocks $B_1,B_2,\ldots B_n$ are in the first frame, $B_{n+1},\ldots B_{2n}$ in the second frame and so on.
Notice that any $n$-sized window intersects with at most two frames.
Within each frame, $\mathit{ACC}$ algorithms use a hierarchical structure of tables that enables it to compute an item's \mbox{block frequency in 
$O(1)$~time.}

$\mathit{ACC_1}$ and $\mathit{ACC_2}$ are illustrated in Figure~\ref{fig:ACC} and are explained below.
The simplest and fastest algorithm, $\mathit{ACC_1}$, computes for each block a frequency table that tracks how many times each item has arrived \emph{from the beginning of the frame}.
For example, the table for block $5n + 7$ (for $n>7$) will contain an entry for each item that is a member of at least one of $B_{5n+1},\ldots B_{5n+7}$.
The key is the item identifier, and the value is its block frequency from the frame's start.
This way, we can compute any block interval frequency by querying at most $3$ tables.
Within a frame, we can compute any interval by subtracting the queried item's block frequency at the beginning of the interval from its block frequency at the end. If the interval spans across two frames, we make \changed{one additional} query for reaching the beginning of the frame, in total \mbox{we query at most $3$ tables.}
\ifdefined\NINEPAGES
\begin{figure*}[]
\else
\begin{figure*}[]
\fi
	\hspace*{-0.4cm}
	\centering
\ifdefined\NINEPAGES	
    \includegraphics[width=0.7\linewidth]{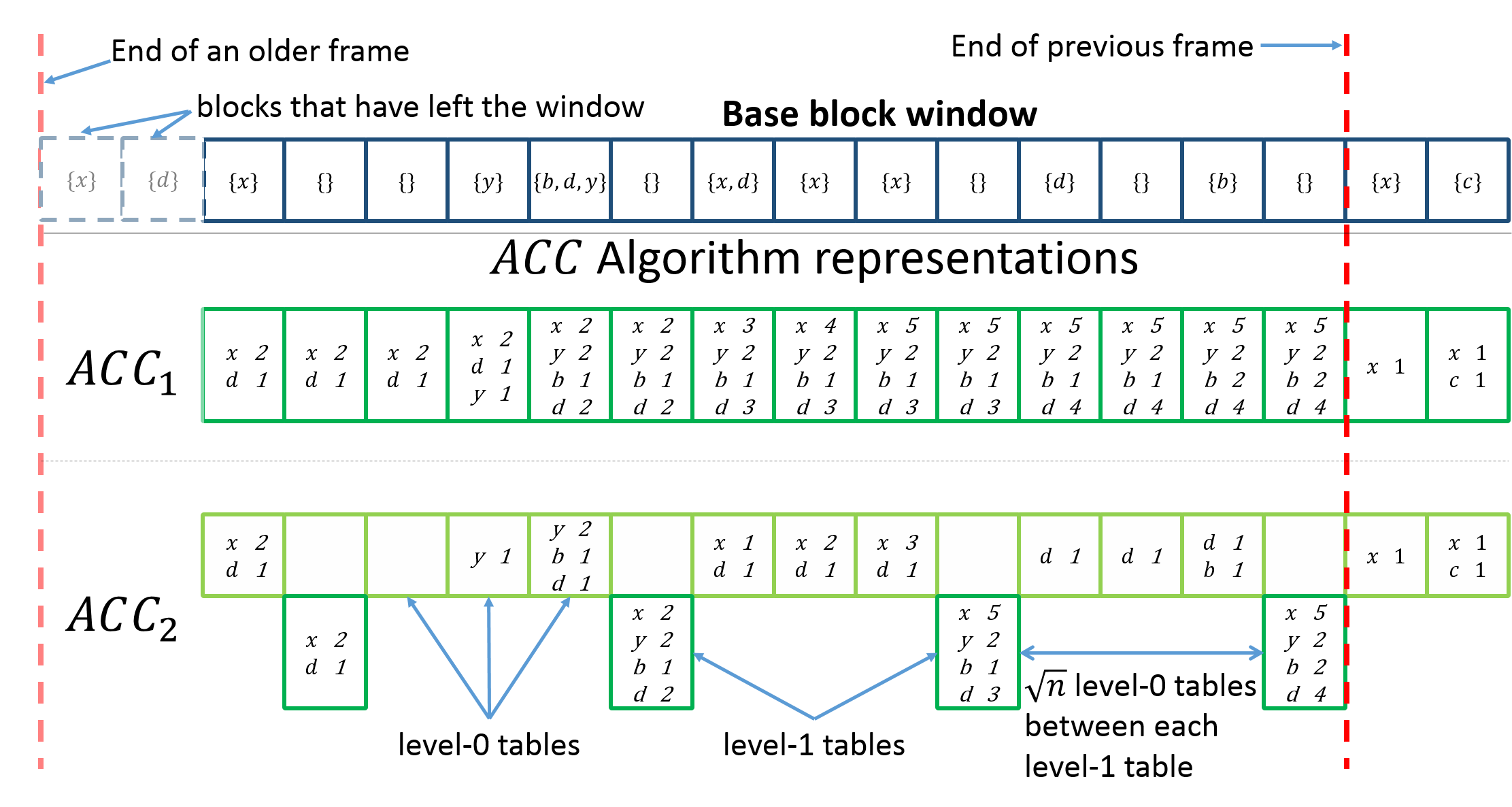}
\else
	\includegraphics[width=1.0\linewidth]{ACC_Illustraion.png}
\fi
	\caption{Illustration of the $\mathit{ACC_1}$ and $\mathit{ACC_2}$ algorithms. $\mathit{ACC_1}$ has only $\mathit{level}_0$ tables that track how many times each item has arrived from the beginning of the frame, while $\mathit{ACC_2}$ has two levels of tables. 
	Each $level_1$ table in $\mathit{ACC_2}$ tracks the frequencies from the beginning of the frame, while $level_0$ tables aggregate the data from the previous $level_1$ table.
 }
	\label{fig:ACC}
	\ifdefined\VLDB
	\vspace{-3mm}
	\fi
\ifdefined\NINEPAGES	
\end{figure*}
\else
\end{figure*}
\fi

$\mathit{ACC_2}$ saves space at the expense of additional table accesses.
Tables now have ``levels'', such that each table is either in $level_0$ or $level_1$.
The core idea is that $\mathit{ACC_1}$ is somewhat wasteful as it may create $O(n)$ table entries for each item, as it appears in all tables within the frame after its arrival.
Instead, we ``break'' each frame into $\sqrt{n}$ sized
\emph{segments}.
At the end of each segment, we keep a single $level_1$ table that counts item frequencies from the beginning of the frame.
Since we can use these tables just as in $\mathit{ACC_1}$, we are left with computing the queried item frequency \emph{within a segment}.
This is achieved with a $level_0$ table, which we maintain for each block.
Alas, unlike the $level_1$ tables, $level_0$ tables only keep the block frequency counts from the \emph{beginning of the segment} the block belongs to.
Thus, each appearance of an item (within a specific block) can appear on all $\sqrt{n}$ $level_1$ tables, but on at most $\sqrt{n}$ $level_0$ tables and this reduce space consumption.
Compared with $\mathit{ACC}_1$, $\mathit{ACC}_2$ reduces the overall number of table entries \mbox{from $O(N\cdot n)$ to $O(N\cdot\sqrt{n})$.}

For an interval $[i,j]$, let $\mathit{block}_i$  and $\mathit{block}_j$ be the block numbers of $i$ and $j$ respectively.
To answer any interval frequency query of item $x$, we consider two cases:
If $\mathit{block}_i$ and $\mathit{block}_j$ are in the same frame, we access $\mathit{block}_i$'s and $\mathit{block}_j$'s tables to get $x$'s frequency from the beginning of the frame till $\mathit{block}_i$, $\mathit{block}_j$ and subtract the results, \changed{line \ref{line:fixed_result}} .
If $\mathit{block}_i$ and $\mathit{block}_j$ are in different frames, we consider $x$'s frequency in the $j$ blocks within the current frame by accessing $\mathit{block}_j$'s tables, plus its frequency within the last $i$ blocks of the previous frame.
To do so, we compute $x$'s frequency from the beginning of the previous frame, \changed{lines \ref{line:case2_begin} - \ref{line:case2_end}}.
A corner case that arises is that the $level_1$ table that includes $\mathit{block}_j$ may have already left the table.
We solve it by maintaining \ghostTables{} for leaving segments: for $1\le \ell\le k$, \ghostTables$[\ell]$ contains the table of last leaving block that has a table at $level_\ell$, \changed{line \ref{line:ghost_table}}.
Hence, we can subtract the corresponding \ghostTables{} entries as well.

\begin{table}[]
	\scriptsize
\centering
	\begin{tabular}{|c|p{5.9cm}|}
		\hline
		\BS & the number of segments from a level that consist of a next-level~segment.
		\tabularnewline
		\hline
		\Tables$[\ell,idx]$ & used for tracking block frequencies.
		Each table is identified with a level $\ell$ and the index of the last block in its segments.
		\tabularnewline
		\hline
		\incTables$[\ell]$ & tables for incomplete segments.
		\tabularnewline
		\hline
		\ghostTables$[\ell]$ & tables for leaving segments.
		\tabularnewline
		\hline
		\OFFSET & The offset within the current frame. \tabularnewline
		\hline
	\end{tabular}
	\normalsize
	\vspace{-.5em}
	\caption{Variables used by $\mathit{ACC_k}$ algorithm.}
	\label{tbl:accVars}
	\ifdefined\VLDB
	\vspace{-1.2em}
	\fi
\end{table}

Next, we generalize this to arbitrary $k$ values.
In $\mathit{ACC_k}$, we have $k$ levels of tables and segments.
We consider each block to be in its own $level_0$ segment and maintain a $level_0$ table for it.
Inductively, each $level_\ell$ segment (for $1\le \ell\le k$) consists of $n^{1/k}$ $level_{\ell-1}$ segments.
That is, each $level_1$ segment contains $n^{1/k}$ blocks, $level_2$ segments each consists of $n^{1/k}$ $level_1$ segments for a total of $n^{2/k}$ blocks, etc.
As each item may now appear in at most $n^{1/k}$ tables of each level, we get that the overall number of table entries is $O(N\cdot k\cdot n^{1/k})$.
To avoid lengthy computations at the end of each segment, we maintain $k$ additional ``incomplete'' tables that contain the cumulative counts for segments that already started, but not all of their blocks have ended yet.
A pseudo-code of the $\mathit{ACC_k}$ algorithm appears in Algorithm~\ref{alg:ACC}.

\ifdefined\NINEPAGES
\vspace*{-5mm}
\fi
\begin{algorithm}[t]
	\caption{$ACC_k$}\label{alg:ACC}
	\ifdefined\VLDB
	\scriptsize
	\fi
	\begin{algorithmic}[1]
		\Statex		Init: $
		\OFFSET\gets 1,
		\tableDis\triangleq {n^{1/k}},$\\
		\qquad{} $initialize\qquad \Tables, \incTables, \ghostTables$
		\Function {Add}{$x$}
		\For {$\ell \in 0,1,\ldots, k-1$}\Comment{Update all incomplete tables}
		\State $\incTables[\ell](x) += 1$
		\EndFor
		\EndFunction			
		\Function {EndBlock()}{}		
		\State $\ell \gets 0$
		\While {$((\ell < k - 1) \wedge \OFFSET \mod \tableDis^{\ell+1} = 0)$}
		\Statex\Comment{A Level-$\ell$ block has ended}
			\State $empty \ \incTables[\ell], \ghostTables[\ell]$ \Comment{Delete all entries}
			\State $\ell \gets \ell + 1$
		\EndWhile
		\State $\ghostTables[\ell] \gets \Tables[\ell,\OFFSET]$ \label{line:ghost_table}	
		\State $\Tables[\ell,\OFFSET] \gets \incTables[\ell]$\Comment{Copy Table}
		\If {$\OFFSET = n$} \Comment{New frame}
			\State $empty \ \incTables[k-1], \ghostTables[k-1]$
		\EndIf
		\State $\OFFSET \gets 1 + (\OFFSET \mod n)$		
		\EndFunction
		\Function {\windowQuery{}}{}\Comment{Frequency in the last $w$ blocks}\label{func:winQuery}
			\State \resizebox{7.45cm}{!}{{\scriptsize $\curFrameFreq\gets incTables[0](x) + \sum_{\ell=1}^{\log_{\tableDis}\OFFSET} \Tables\brackets{\ell, \tableDis^{\ell}\floor{\frac{\OFFSET}{\tableDis^{\ell}}}}(x)$}\small \label{line:sum_tables}}
			\If {$w\le \OFFSET + 1$}
				\State\Return \resizebox{6.1cm}{!}{{\scriptsize $\curFrameFreq - \sum_{\ell=0}^{\log_{\tableDis}\OFFSET+1-w} \Tables\brackets{\ell, \tableDis^{\ell}\floor{\frac{\OFFSET+1-w}{\tableDis^{\ell}}}}(x)$}} \label{line:fixed_result}
			\EndIf
			\State $B \gets n+\OFFSET+1-w$ \label{line:case2_begin}
			\State $L \gets \max\set{\ell\mid \tableDis^{\ell}\floor{\frac{B}{\tableDis^{\ell}}}\ge \OFFSET+1}$
			\State {\tiny$preW \gets \sum_{\ell=0}^{L} \Tables\brackets{\ell, \tableDis^{\ell}\floor{\frac{B}{\tableDis^{\ell}}}}(x) + \sum_{\ell=L+1}^{k-1} \ghostTables[\ell](x)$}
			\small
			\State\Return $\curFrameFreq+\Tables[k-1,n](x) - preW$ \label{line:case2_end}
		\EndFunction
		\Function {\SIQ}{$x,i,j$}		
		\If {$i = 0$} \label{line:case1}
			\State \Return \windowQueryText{}$(x,j)$
		\EndIf
		\State \Return \windowQueryText{}$(x,j)$ - \windowQueryText{}$(x,i)$
		\EndFunction		
{}	\end{algorithmic}
	\normalsize
\end{algorithm}
\ifdefined\NINEPAGES
\setlength{\textfloatsep}{14pt}
\setlength{\intextsep}{16pt}
\setlength{\dbltextfloatsep}{12pt}
\setlength{\abovedisplayskip}{3pt}
\setlength{\belowdisplayskip}{3pt}
\fi
\subsubsection{Analysis}

The following theorem bounds the memory consumption of the ACC algorithms.

\ifdefined\EXTENDED
\begin{theorem}
Denote the sum of cardinalities of the last $n$ blocks by $N$. Algorithm~\ref{alg:ACC} requires $O(N\cdot n^{1/k}\cdot k\cdot (\log n + \log|\mathcal U|))$ space.
\end{theorem}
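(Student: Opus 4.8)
The plan is to bound the space in two stages: first count the total number of entries stored across all the hash tables, then multiply by the number of bits each entry occupies. Every stored entry consists of an item identifier drawn from $\mathcal U$ together with a cumulative block count, and since any such count is bounded by the $n$ blocks of a frame, each entry costs $O(\log|\mathcal U| + \log n)$ bits. Hence it suffices to show that the number of entries is $O(N\cdot k\cdot n^{1/k})$.

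First I would account for the entries of the main structure $\Tables$. Recall that $N$ is the total number of (item, block) incidences in the window, i.e. the sum over the $n$ blocks of the number of distinct items appearing in each. I claim that each such incidence -- an item $x$ appearing in some block $b$ -- is recorded in at most $n^{1/k}$ tables of each level $\ell\in\{0,\dots,k-1\}$. The key structural fact is that a $level_\ell$ table accumulates counts from the start of its enclosing $level_{\ell+1}$ segment, and exactly $n^{1/k}=\tableDis$ many $level_\ell$ segments fit inside one $level_{\ell+1}$ segment (with the frame playing the role of the top-level segment for $\ell=k-1$). Consequently, the appearance of $x$ in $b$ can be reflected only in the $level_\ell$ tables of those $level_\ell$ segments that lie at or after $b$ within $b$'s $level_{\ell+1}$ segment -- at most $n^{1/k}$ of them. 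Summing over the $k$ levels, each incidence generates $O(k\cdot n^{1/k})$ entries, giving $O(N\cdot k\cdot n^{1/k})$ entries in $\Tables$ overall.

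Next I would argue that the auxiliary tables do not change the asymptotics. There are only $k$ incomplete tables $\incTables$ and $k$ ghost tables $\ghostTables$; each $\incTables[\ell]$ holds a prefix of the counts that will eventually be copied into a single $\Tables[\ell,\cdot]$ entry, so its size is dominated by that of one $level_\ell$ table, and each $\ghostTables[\ell]$ is merely a retained copy of a departing table. Their combined contribution is therefore absorbed into the $O(N\cdot k\cdot n^{1/k})$ bound. Multiplying the entry count by the $O(\log|\mathcal U| + \log n)$ per-entry cost yields the claimed $O(N\cdot n^{1/k}\cdot k\cdot(\log n + \log|\mathcal U|))$ space.

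The step I expect to be the main obstacle is making the ``at most $n^{1/k}$ tables per level'' claim fully rigorous: one must pin down precisely, from the $\OFFSET \bmod \tableDis^{\ell+1}$ bookkeeping inside {\sc EndBlock}, exactly which tables record a given appearance, and confirm that no appearance is double-counted across levels nor leaks outside its enclosing segment, so that the level-by-level sum does not conceal an extra factor.
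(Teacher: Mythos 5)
Your proposal is correct and follows essentially the same argument as the paper: each (item, block) incidence appears in at most $n^{1/k}$ tables per level since $n^{1/k}$ $level_\ell$ segments nest inside each $level_{\ell+1}$ segment, giving $O(N\cdot k\cdot n^{1/k})$ entries of $O(\log n + \log|\mathcal U|)$ bits each. Your accounting is if anything slightly more careful than the paper's (which states the per-level bound and multiplies through without separately addressing \incTables{} and \ghostTables{}), but the core decomposition is identical.
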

\else
\begin{theorem}
	Denote the sum of cardinalities of the last $n$ blocks by $N$. Algorithm~\ref{alg:ACC} requires~$O(N n^{1\over k} k \logp{n|\mathcal U|})$~space.
\end{theorem}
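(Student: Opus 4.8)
The plan is to bound the total space as the product of two quantities: the number of entries stored across all tables, and the number of bits per entry. The second factor is easy, so I would dispose of it first. Every entry consists of an item identifier, costing $O(\log|\mathcal U|)$ bits, together with a stored block-frequency value. Since each table aggregates counts over at most one frame, i.e. at most $n$ blocks, every stored value is at most $n$ and so costs $O(\log n)$ bits. Hence it suffices to prove that the number of entries in $\Tables$, $\incTables$, and $\ghostTables$ is $O(N\cdot k\cdot n^{1/k})$, after which the claimed bound $O(N\cdot n^{1/k}\cdot k\cdot(\log n+\log|\mathcal U|))$ follows by multiplication.

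To count entries I would use an \emph{incidence-counting} argument. Call a pair $(x,b)$ with $x$ appearing in block $b$ an \emph{incidence}; by the definition of $N$ as the sum of block cardinalities, the number of incidences in the relevant $n$ blocks is exactly $N$, and since any window overlaps at most two frames, the number of incidences among all currently live blocks is $\Theta(N)$. Every stored entry is created by copying some incidence out of an incomplete table, so I would fix an incidence $(x,b)$ and bound the number of entries it can generate.

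The crux is the reset schedule of the incomplete tables, which I would analyze level by level by reading \textsc{EndBlock} in Algorithm~\ref{alg:ACC}. When $x$ is added in block $b$, its count is written into $\incTables[\ell]$ for every level $\ell$, and a level-$\ell$ entry for this incidence survives only until $\incTables[\ell]$ is next emptied. For $\ell<k-1$ this happens at the next position that is a multiple of $\tableDis^{\ell+1}$, whereas $\incTables[k-1]$ is emptied only at the frame boundary $\OFFSET=n$; thus the top level is cumulative over a whole frame while every lower level is cumulative over a single level-$(\ell+1)$ segment. During the survival window a level-$\ell$ snapshot is taken into $\Tables[\ell,\cdot]$ exactly at positions whose $\tableDis$-adic valuation equals $\ell$ (multiples of $\tableDis^{\ell}$ but not $\tableDis^{\ell+1}$). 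Between two consecutive multiples of $\tableDis^{\ell+1}$ there are at most $\tableDis-1=n^{1/k}-1$ such positions, and from block $b$ to the frame end there are at most $n/\tableDis^{k-1}=\tableDis=n^{1/k}$ multiples of $\tableDis^{k-1}$. Hence the incidence contributes at most $n^{1/k}$ entries at each of the $k$ levels, i.e. $O(k\cdot n^{1/k})$ entries in $\Tables$ overall, and the $k$ live $\incTables$ and $k$ live $\ghostTables$ add only $O(k)$ more per incidence, which is dominated. Summing over all $O(N)$ incidences yields $O(N\cdot k\cdot n^{1/k})$ entries, completing the count.

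The main obstacle is precisely this middle step: making the per-level count of $n^{1/k}$ rigorous requires carefully separating when each $\incTables[\ell]$ is emptied from when it is snapshotted into $\Tables[\ell,\cdot]$, and handling the top level $k-1$ (cumulative over an entire frame) differently from levels $0,\dots,k-2$ (cumulative over a level-$(\ell+1)$ segment). I would also verify the bookkeeping claim that retaining at most two frames' worth of tables — needed because a window spans two frames — inflates the incidence count only by a constant factor, so the final bound stays expressed in terms of $N$.
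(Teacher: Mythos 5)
Your proof is correct and follows essentially the same approach as the paper's: bound the number of table entries per item appearance by $n^{1/k}$ per level across $k$ levels, then multiply by the $O(\log n + \log|\mathcal U|)$ bits per entry. The paper's own proof states this incidence-per-level bound in one sentence and moves on; your analysis of the snapshot/reset schedule (including the special handling of level $k-1$ and the $\incTables$/$\ghostTables$ bookkeeping) is a rigorous elaboration of exactly that claim, not a different argument.
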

\begin{proof}
    Each item may appears in at most $n^{1/k}$ tables of each level (because, as described before, each $level_\ell$ segment (for $1\le \ell\le k$) consists of $n^{1/k}$ $level_{\ell-1}$ segments).
   $\mathit{ACC_k}$ algorithm has $k$ levels of tables, thus each item may appear in total at most $kn^{1/k}$ tables.
    Each table entry consists of an $O(\log|\mathcal U|)$-bits identifier and a counter of $O(\log n)$ bits.
    Thus, the space of each item is $O(n^{1\over k} k \logp{n|\mathcal U|})$; hence, for $N$ items, $\mathit{ACC_k}$ requires~$O(N n^{1\over k} k \logp{n|\mathcal U|})$~bits.
\end{proof}
\fi
\ifdefined\NINEPAGES
\fi

\begin{theorem}
	Algorithm \ref{alg:ACC} solves the \SIProblem{} problem.
\end{theorem}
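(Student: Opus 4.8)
The plan is to reduce the claim to the correctness of the \windowQueryText{} subroutine and then establish that subroutine by an invariant-plus-telescoping argument. Since \SIQ{}$(x,i,j)$ returns \windowQueryText$(x,j)-$\windowQueryText$(x,i)$ (and just \windowQueryText$(x,j)$ when $i=0$, which is correct because $\xSetWindowFrequency[0]=0$), while $\xSetIntervalFrequency = \xSetWindowFrequency[j]-\xSetWindowFrequency[i]$ by definition, it suffices to prove that for every $x\in\mathcal U$ and every $1\le w\le n$, \windowQueryText$(x,w)$ returns exactly $\xSetWindowFrequency[w]$, the block frequency of $x$ over the last $w$ blocks (equivalently, the number of \textsc{Add}$(x)$ operations falling in those blocks, which is what the cumulative counters track).

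First I would set up and prove, by induction over the sequence of \textsc{Add}/\textsc{EndBlock} operations, the data-structure invariants. Writing $\tableDis=n^{1/k}$ and letting \OFFSET{} denote the current frame offset, I claim: (i)~$\incTables[\ell](x)$ counts the \textsc{Add}$(x)$'s since the start of the current, still-incomplete level-$\ell$ segment; (ii)~for every block index $b$ that is the right endpoint of a completed level-$\ell$ segment, $\Tables[\ell,b](x)$ counts the \textsc{Add}$(x)$'s from the start of the enclosing level-$(\ell{+}1)$ segment (the frame itself when $\ell=k-1$) through block $b$; and (iii)~$\ghostTables[\ell]$ holds the level-$\ell$ table of the most recently departed level-$\ell$ segment. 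Invariants (i)--(ii) follow because \textsc{Add} increments every incomplete table while \textsc{EndBlock} snapshots $\incTables[\ell]$ into $\Tables[\ell,\OFFSET]$ and empties the finer incomplete tables exactly when a level-$\ell$ boundary is reached ($\OFFSET\equiv 0 \pmod{\tableDis^{\ell+1}}$); invariant (iii) is maintained by the assignment $\ghostTables[\ell]\gets\Tables[\ell,\OFFSET]$ on line~\ref{line:ghost_table}.

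The technical heart is a positional (mixed-radix, base $\tableDis$) decomposition. The blocks from a frame's start through \OFFSET{} split into a nested chain of aligned segments, one per level, whose right endpoints are exactly the indices $\tableDis^{\ell}\floor{\OFFSET/\tableDis^{\ell}}$; summing the corresponding table entries therefore telescopes, so by the invariants the quantity \curFrameFreq{} computed on line~\ref{line:sum_tables} equals the number of \textsc{Add}$(x)$'s from the frame's start up to \OFFSET. In the within-frame case $w\le\OFFSET+1$, the same decomposition applied at index $\OFFSET+1-w$ yields the cumulative count up to the block preceding the window, so the subtraction on line~\ref{line:fixed_result} returns the count over exactly the last $w$ blocks, namely $\xSetWindowFrequency[w]$.

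The main obstacle is the cross-frame case $w>\OFFSET+1$, where the window covers all of the current frame plus a suffix of the previous one and \windowQueryText{} returns $\curFrameFreq+\Tables[k-1,n](x)-\mathit{preW}$. The subtlety is that, as the current frame has advanced, the fine-grained previous-frame tables have already been overwritten, so the positional decomposition of $\mathit{preW}$ must read its low-level terms from \ghostTables{} instead of \Tables. I would prove that the threshold level $L=\max\set{\ell\mid \tableDis^{\ell}\floor{B/\tableDis^{\ell}}\ge \OFFSET+1}$, with $B=n+\OFFSET+1-w$, is precisely the level above which the previous-frame segment endpoints have not yet been reused, so that reading levels $0,\dots,L$ from \Tables{} and levels $L{+}1,\dots,k{-}1$ from \ghostTables{} reconstructs the intended previous-frame cumulative count by invariant~(iii). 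Granting this, $\Tables[k-1,n](x)-\mathit{preW}$ equals the previous-frame suffix count and the total equals $\xSetWindowFrequency[w]$. Combining the two cases gives the correctness of \windowQueryText{}, and hence, via the reduction above, that Algorithm~\ref{alg:ACC} computes every \xSetIntervalFrequency{} exactly, i.e.\ it solves \SIProblem{}.
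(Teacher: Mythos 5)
Your route is the same as the paper's own (sketch) proof: reduce correctness of \SIQ{} to correctness of \windowQueryText{}, characterize what $\incTables$, $\Tables$, and $\ghostTables$ hold, justify Lines~\ref{line:sum_tables} and~\ref{line:fixed_result} via the aligned-segment (telescoping) decomposition, and handle the cross-frame case through the threshold level $L$ and the ghost tables. Indeed, your invariant (ii) and your characterization of $L$ are sharper than the paper's wording, which loosely says that $\Tables$ count ``from the beginning of their frame'' (true only at the top level).

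There is, however, a genuine flaw in invariant (i), and it is load-bearing. In Algorithm~\ref{alg:ACC} the assignment $\Tables[\ell,\OFFSET]\gets\incTables[\ell]$ is a \emph{copy}; $\incTables[\ell]$ is emptied only inside the while loop, i.e.\ when $\OFFSET\equiv 0\pmod{\tableDis^{\ell+1}}$ (and, for $\ell=k-1$, at a frame boundary). Consequently $\incTables[\ell](x)$ counts the \textsc{Add}$(x)$'s since the start of the enclosing \emph{level-$(\ell+1)$} segment, not since the start of the current level-$\ell$ segment as you state. As written, (i) is also inconsistent with your (correct) invariant (ii): $\Tables[\ell,b]$ is a snapshot of $\incTables[\ell]$, so both must measure from the same starting point, and an induction cannot establish the two simultaneously. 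The error propagates to the heart of the argument: in the telescoping for $\curFrameFreq$, the level-$0$ term is $\incTables[0](x)$, which must cover \emph{all} blocks since the last level-$1$ boundary, i.e.\ indices $\tableDis\floor{\OFFSET/\tableDis}+1,\dots,\OFFSET$ including the incomplete one; under your literal (i) it would cover only the incomplete block, leaving the completed blocks in between uncovered, so $\curFrameFreq$ would undercount the frame prefix and the within-frame case fails. (The paper's sketch shares this imprecision, saying $\incTables$ store the frequency ``within the current block,'' but a rigorous induction cannot.) The repair is local --- restate (i) with ``level-$(\ell+1)$ segment,'' which is exactly what your own emptying condition $\OFFSET\equiv 0\pmod{\tableDis^{\ell+1}}$ encodes --- and with that correction the remainder of your argument, which mirrors the paper's, goes through.
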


\begin{sproof}
	We need to prove that upon an \SIQ{}
	${(x,i,j)}$ query, for any $i\le j\le n$ and $x\in\mathcal U$, $\mathit{ACC_k}$ is able to compute the \emph{exact} answer.
    Notice that in handling queries in Algorithm \ref{alg:ACC}, we split the computation in two:
    The first case is when $i=0$, Line \ref{line:case1}, this means that the end of the interval is also the last block, and thus we only need to return the frequency in the last $j$ block in the window, as calculated by \windowQueryText{}$(x,j)$.
    Otherwise, we subtract the frequency that is calculated \windowQueryText{}$(x,i)$ from the result of \windowQueryText{}$(x,j)$.
    Hence, we need to show that the frequency calculated by \windowQueryText{}$(x,w)$ is correct.

    As is evident from the code in Lines $1$--$13$, $\incTables$ store the frequency of items within the current block, while $\Tables$ store the frequencies of completed blocks from the beginning of their frame.
    Consider the case where the entire interval is within the current frame.
    In this case, the frequency of an item in the last $w$ blocks can be calculated as its frequency in the current block (using $\incTables$) plus its frequency in the preceding blocks, as is done in Line \ref{line:sum_tables}, and stored in $\curFrameFreq$.
    Notice that to reduce query time, we access the highest level containing this information.
    However, since $\Tables$ store the frequency from the beginning of the frame, we need to subtract from $\curFrameFreq$ the item's frequency in prior blocks, which is done in Line \ref{line:fixed_result} (again, by accessing the highest level tables that include this data).

    The second case is when the given interval crosses into the previous frame.
    In this case, we need to add to $\curFrameFreq$ the frequency of the blocks that are included in the previous frame.
    Once again, we need to query the table holding the frequency in the last relevant block of that frame and subtract from the result the frequency in the preceding tables.
    As some of these tables might be beyond an entire window limit, their information might be stored in $\ghostTables$ rather than $\Tables$.
    This is handled in Lines~\ref{line:case2_begin}--\ref{line:case2_end}.
\end{sproof}

\ifdefined\EXTENDED
\begin{sproof}
	We need to prove that upon an \SIQ{}${(x,i,j)}$ query, $\mathit{ACC_k}$ is able to compute the \emph{exact} answer for any $i\le j\le n$ and $x\in\mathcal U$.
    Notice that in handling such queries in Algorithm \ref{alg:ACC}, we split the computation in two:
    If $i=0$, this means that the end of the interval is also the last block, and thus we only need to return the frequency w.r.t. the last $j$ block in the window, as calculated by $\windowQueryText{}(x,j)$.
    Otherwise, we subtract the frequency that is calculated $\windowQueryText{}(x,i)$ from the result of $\windowQueryText{}(x,j)$.
    Hence, we need to show that the frequency calculated by $\windowQueryText{}(x,w)$ is correct.

    As evident from the code in Lines $1$--$13$, $\incTables$ store the frequency of items within the current block while $\Tables$ store the frequencies of items of completed blocks from the beginning of their frame.
    Consider the case where the entire range is within the current frame.
    Hence, in principle, the frequency of an item in the last $w$ blocks can be calculated as its frequency in the current block ($\incTables$) plus its frequency in the preceding blocks, as is done in Line~$15$, and stored in $\curFrameFreq$.
    Notice that to reduce query time, we access the highest level containing this information.
    However, since $\Tables$ store the frequency from the beginning of the frame, we need to subtract from $\curFrameFreq$ obtained in Line~$15$ the frequency of this item in prior blocks, which is done in Line~$17$ (here again, accessing the highest level tables that include this data).

    The second case is when the range crosses into the previous frame.
    In this case, we need to add to $\curFrameFreq$ the frequency of the blocks that are included in the previous frame.
    Once again, we need to find the table holding the frequency in the last relevant block of that frame, and subtract from it the frequency in the preceding tables.
    Yet, as some of these tables might be beyond an entire window limit, their information might be stored in $\ghostTables$ rather than $\Tables$.
    This is handled in Lines~$20$--$21$.
\end{sproof}
\else
\fi

\subsection{Hierarchical Interval Tree ($\mathit{HIT}$)}
\label{sec:hit}

\ifdefined\NINEPAGES
\begin{figure*}[]
	\else
	\begin{figure*}[]
		\fi
		\hspace*{-0.4cm}
		\centering
		\ifdefined\NINEPAGES	
        \includegraphics[width=0.7\linewidth]{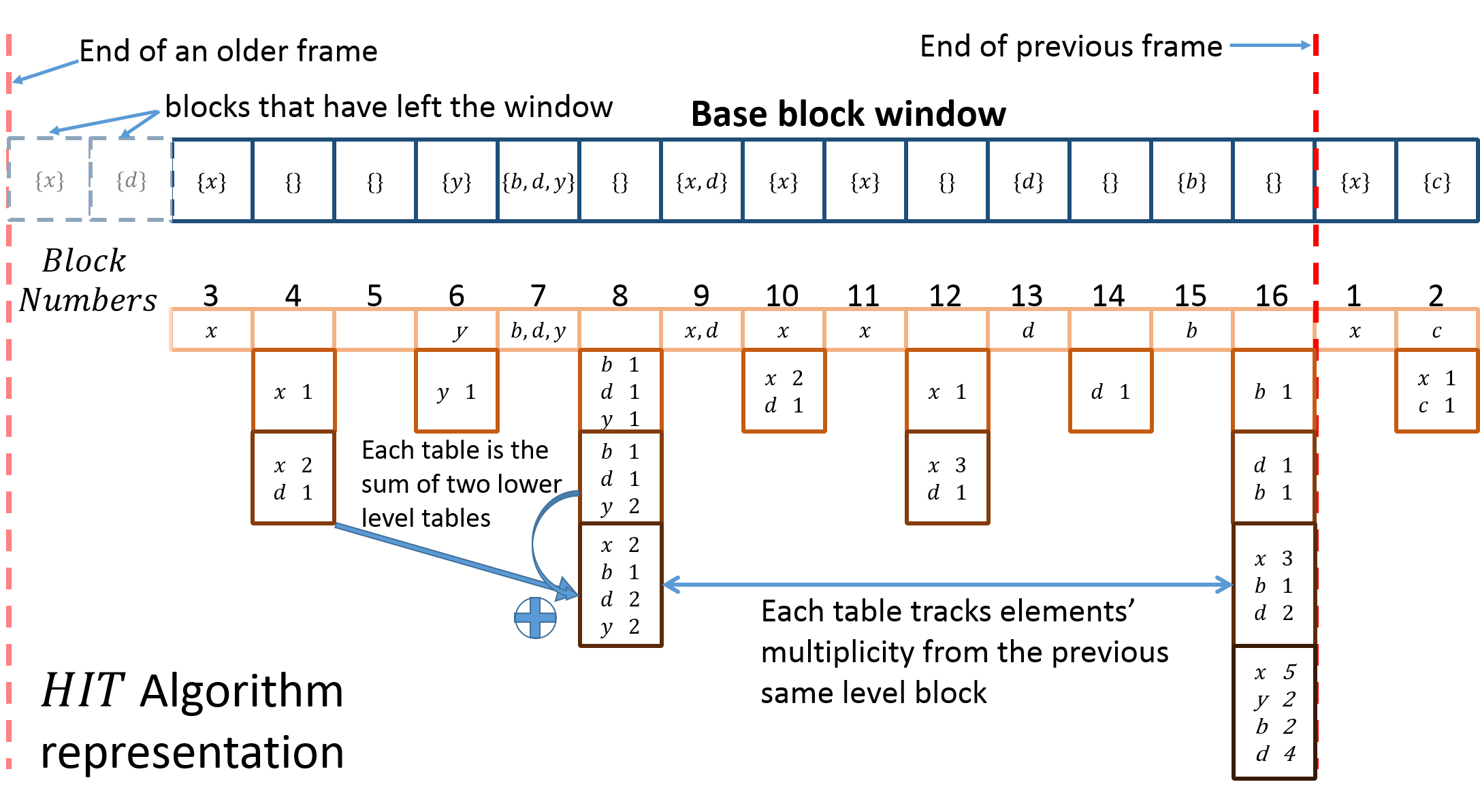}
		\else
		\includegraphics[width=1.00\linewidth]{HIT_Illustraion.png}
		\fi
		\caption{$\mathit{HIT}$ algorithms, first level tables track how many times each item arrived within the corresponding block. At $\mathit{level}_\ell$, tables of $\mathit{block}_i$ track how many times each item has arrived \emph{between $\mathit{block}_{i - 2^{\ell} + 1}$  and $\mathit{block}_i$}. For example, item $b$ arrives once at $\mathit{block}_7$, so $\mathit{block}_7$ $\mathit{level}_0$ table contains $b$ with count $1$, $\mathit{block}_8$ $\mathit{level}_1$ table tracks how many times each item arrived between $\mathit{block}_7$ and $\mathit{block}_8$, so it contains $b$ with count $1$ and $\mathit{level}_2$ table for $\mathit{block}_8$, track how many times each item arrived between $\mathit{block}_5$ and $\mathit{block}_8$, and as well contains $b$ once.
		Table at $\mathit{level}_{\ell + 1}$ merges two $\mathit{level}_\ell$ frequency tables. For example, $\mathit{block}_8$ third level table merges second level tables of $\mathit{block}_8$ and $\mathit{block}_8$.}
		\label{fig:HIT}
		\ifdefined\NINEPAGES	
		\vspace{-3mm}
	\end{figure*}
	\else
\end{figure*}
\fi

Hierarchical Interval Tree, denoted $\mathit{HIT}$, tracks flow frequencies using a hierarchical tree structure in which each node stores the partial frequency of its sub-tree.
%
%
%
Precisely, the levels of the tree are defined as follows:
$\mathit{level}_0$ includes frequency tables, one for each block of the stream, that track how many times each item arrived \emph{within the corresponding block}.
Tables at $\mathit{level}_\ell$ of $\mathit{block}_i$ track how many times each item has arrived \emph{between $\mathit{block}_{i - 2^{\ell} + 1}$  and $\mathit{block}_i$}, where $0 < \ell \le trailing\_zeros(i)$, \changed{line \ref{line:hit_tables}}.
That is, these tables contain partial queries results for each item and track item's multiplicity from the previous same level block.
Hence, each level contains tables for half the blocks of the previous level, and thus each $\mathit{block}_i$ has tables in \textit{trailing\_zeros}$(i)$ levels; we assume that the number of trailing zeros can be computed efficiently with the $ctz$ machine instruction in modern CPUs.
An illustration of the algorithm appears in Figure~\ref{fig:HIT}.

For example, consider $\mathit{block}_9$, $\mathit{block}_{10}$, $\mathit{block}_{11}$ and $\mathit{block}_{12}$ in Figure~\ref{fig:HIT}.
During $\mathit{block}_9$, items $x$ and $d$ arrive; $x$ also arrives in $\mathit{block}_{10}$ and $\mathit{block}_{11}$, while there are no items arrivals in $\mathit{block}_{12}$.
So the tables of $\mathit{block}_{12}$ will be as follow: $\mathit{level}_0$ table is empty because there is no items arrival within $\mathit{block}_{12}$.
$\mathit{level}_1$ table tracks items arrival between $\mathit{block}_{11}$ and $\mathit{block}_{12}$; its content will be item $x$ with count $1$.
$\mathit{level}_2$ table counts the item arrival between $\mathit{block}_9$ and \changed{$\mathit{block}_{12}$,} so it will contain item $x$ three times ($\mathit{block}_9$, $\mathit{block}_{10}$ and $\mathit{block}_{11}$) and $d$ once (in $\mathit{block}_9$).
Note that each table at $\mathit{level}_{\ell + 1}$ merges two $\mathit{level}_\ell$ frequency tables.

We can compute any interval frequency by using the hierarchical tree tables.
While this can be done using linear scan, the higher levels of the tree are designed to allow efficient time computation by using the stored partial queries.

Notice that some of the partial queries results stored in the higher levels may be invalid.
For example, in case a new block is added, the oldest one departs the window, so the content of tables that refer to the departing block become invalid.
We solve this problem by choosing the levels to use such that we only consider valid tables.
Let $\mathit{block}_i$  and $\mathit{block}_j$ be the block numbers of the first interval index and the second one.
Here, we scan backward from $\mathit{block}_j$ to $\mathit{block}_i$, greedily using the highest possible level at each point, \changed{line \ref{line:greedy_choose}}.
This minimizes the number of needed steps.
If $\mathit{block}_j> block_i$, all tables along the way are valid.
In this case, we only need $\log_2(block_j - block_i +1)$ value look-ups.
Otherwise, we choose $\mathit{level}_0$ tables between blocks $1$ and $\mathit{block}_j$, so we need $\log_2(block_j + 1)$ value look-ups, and then another $\log_2(n-block_i+1)$ look-ups for querying the remaining interval.
Overall, our \mbox{computation takes at most $2\logn$ steps.}

We use an incremental table for incomplete blocks in each we increment an element $x$'s entry for any \text{ADD}$\bm{(x)}$ operation, \changed{line \ref{line:incremental_table}}.
The pseudo code of the algorithms appears in Algorithm~\ref{alg:HIT} and Table~\ref{tbl:sliding-window-vars} contains a list of the used~variables.

\begin{algorithm}[t]
    \caption{$\mathit{HIT}$}\label{alg:HIT}
    \ifdefined\NINEPAGES
	\scriptsize
	\fi
	\begin{algorithmic}[1]
		\Statex		Initialization: $
		\OFFSET\gets 0, initialize\qquad \mathit{\Tables, \incTable}.$
		\Function {Add}{$x$}
		\State $\incTable(x) += 1$\Comment{Update the incomplete block's tables} \label{line:incremental_table}
		\EndFunction			
		\Function {EndBlock()}{}		
		\State $\OFFSET \gets (\OFFSET + 1) \mod n$\label{line:offset}
        \State $\Tables[0,idx] \gets \incTable$
        \State $empty \quad \incTable$ \Comment{Delete all entries.}
        \For {$\ell \in 1,\ldots,ctz(\OFFSET)$}
		\ifdefined\VLDB
		\State
        {{$\Tables[\ell,\OFFSET] =\Tables[\ell - 1,\OFFSET] $\label{line:hit_tables}\Statex\hspace{3.5cm} $+ \Tables[\ell - 1,\OFFSET - 2^{\ell} + 1]$}} 
        \else
        \State
        {{$\Tables[\ell,\OFFSET] =\Tables[\ell - 1,\OFFSET] + \Tables[\ell - 1,\OFFSET - 2^{\ell} + 1]$}}\label{line:hit_tables}
        \fi
        \EndFor
		\EndFunction
		\Function {\SIQ}{$x,i,j$}		
		\State $last = (\OFFSET - j) \mod n$ \Comment{The most recent block's index}
		\State $first = (\OFFSET - i) \mod n$ \Comment{The oldest queried block's}
		\State $ b \gets\mbox{$first$}$
		\State $count \gets 0$
		\State $ d \gets\mbox{$1 +\parentheses{first - last \mod n}$}$
		\While { $d > 0$} \label{line:while}
		\State $ level \gets\mbox{$\min({ctz(b), \floor{\log{d}}})$}$ \label{line:greedy_choose}
		\State $ count \gets\mbox{count + \Tables[level, b](x)}$\label{line:count}
		\State $d \gets\mbox{$d$ - $2^{level}$}$\label{line:step}
		\State $b \gets\mbox{$b$ - $2^{level}$}$\label{line:block}
		\If {$ b = 0$}
		\State $ b \gets k$
		\EndIf
		\EndWhile		
		\State\Return {$count$}
		\EndFunction
	\end{algorithmic}
	\normalsize
\end{algorithm}

\begin{table}[t]
\ifdefined\VLDB
	\scriptsize
\fi
	\centering
	\begin{tabular}{|c|p{6.0cm}|}
		\hline
		\BS & Number of blocks from a level that consist of a next-level~block.
		\tabularnewline
		\hline
		\Tables$[\ell,idx]$ & used for tracking block frequencies.
		Each table is identified with a level $\ell$ and the index of the last block in its block.
		\tabularnewline
		\hline
		\incTable & A table for the most recent, incomplete, block.
		\tabularnewline
		\hline
		\OFFSET & The offset within the current frame. \tabularnewline
		\hline
	\end{tabular}
	\normalsize
	\ifdefined\VLDB
	\vspace{-.7em}
	\fi
	\caption{Variables used by $\mathit{HIT}$ algorithm.}
	\label{tbl:sliding-window-vars}
\ifdefined\NINEPAGES	
	\vspace{-1.3em}
\fi
\end{table}

\subsubsection{Analysis}
\ifdefined\EXTENDED
We now analyze the $\mathit{HIT}$ algorithm.
We start by proving its correctness.
\fi

\begin{theorem}
Algorithm \ref{alg:HIT} solves the \SIProblem{} problem.
\end{theorem}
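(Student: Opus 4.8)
The plan is to reduce correctness to two independent facts: (i) every table $\Tables[\ell,idx]$ that the algorithm stores holds the \emph{exact} per-item count over the contiguous block range $[\,idx-2^\ell+1,\,idx\,]$, and (ii) the backward scan in \SIQ{} decomposes the queried block range into a disjoint union of such dyadic ranges, each realized by a stored table. Since \SIQ{}$(x,i,j)$ returns $\xSetWindowFrequency[j]-\xSetWindowFrequency[i]=\xSetIntervalFrequency$, which by definition is the total count of $x$ recorded over the blocks ranked between the $i$-th and $j$-th most recent, establishing (i) and (ii) yields the exact answer and hence solves \SIProblem{}. I would first verify the index bookkeeping (the values $last$, $first$, $d$) against this definition; this is routine but must be pinned down.

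First I would prove (i) as an invariant maintained by {\sc EndBlock}, by induction on the level $\ell$. For $\ell=0$, the table is set from $\incTable$, which {\sc Add} increments exactly once per arrival in the current block, so it counts the single block $[\,\OFFSET-2^0+1,\OFFSET\,]$. For the inductive step, Line~\ref{line:hit_tables} sets $\Tables[\ell,\OFFSET]$ to the sum of two level-$(\ell-1)$ tables; using the hypothesis that each covers $2^{\ell-1}$ consecutive blocks, I would check that the two ranges are the adjacent lower and upper halves of $[\,\OFFSET-2^\ell+1,\OFFSET\,]$, so their entrywise sum is the claimed count. The loop bound $\ell\le ctz(\OFFSET)$ guarantees both child tables exist (one built earlier in this {\sc EndBlock}, the other at a previous block boundary) and are the current, non-overwritten ones, so the merge reads correct values; conversely $\Tables[\ell,b]$ exists precisely when $\ell\le ctz(b)$.

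Next I would prove (ii) via a loop invariant for the \texttt{while} loop: at the top of each iteration, $count$ equals the count of $x$ over the already-processed suffix of the range, $d$ equals the number of blocks still to cover, and $b$ indexes the most recent uncovered block. The greedy choice $level=\min(ctz(b),\floor{\log d})$ is the crux: the term $ctz(b)$ ensures $\Tables[level,b]$ exists, while $\floor{\log d}$ ensures $2^{level}\le d$, so the range $[\,b-2^{level}+1,b\,]$ lies wholly inside the remaining range and is disjoint from the already-counted suffix. Adding $\Tables[level,b](x)$ therefore extends the covered suffix by exactly $2^{level}$ contiguous blocks, and decrementing $b$ and $d$ by $2^{level}$ re-establishes the invariant. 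Since $2^{level}\ge 1$, $d$ strictly decreases and the loop halts exactly when $d=0$, i.e.\ when the whole range is covered, whence $count=\xSetIntervalFrequency$.

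I expect the main obstacle to be the wraparound case, where the queried range straddles the frame boundary ($last>first$ before the modular reduction). Here some higher-level tables stored for blocks in the older portion of the window are \emph{stale}: they were built to span $2^\ell$ blocks, part of which have since expired or been overwritten by the new frame. The argument I would make is that the $\floor{\log d}$ cap, together with stepping $b$ down and resetting it across the $b=0$ boundary, forces the scan to use only tables whose spanned blocks all lie within the current window; in particular, near the boundary the level is driven down to $0$, exactly as described in the surrounding text, so no stale table is ever read. Making this ``every table read is valid'' claim precise---by proving, alongside the loop invariant, that $b-2^{level}+1\ge last$ and that $\Tables[level,b]$ is fresh under the circular indexing---is the delicate step, and I would discharge it by tracking block timestamps to certify that each accessed table corresponds to a live, correctly-merged range.
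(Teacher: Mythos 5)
Your proposal is correct and takes essentially the same route as the paper's own proof: the stored tables hold exact per-item counts over dyadic block ranges, and the greedy backward scan (choosing level $\min(ctz(b),\floor{\log d})$) decomposes the queried range into disjoint such ranges, so the accumulated sum telescopes to $\xSetWindowFrequency[j] - \xSetWindowFrequency[i] = \xSetIntervalFrequency$. Your write-up is in fact more rigorous than the paper's, which simply asserts both facts without the table-correctness induction, the loop invariant, or any formal treatment of the wraparound/stale-table case that you correctly single out as the delicate step.
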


\begin{proof}
We need to prove that upon an \SIQ{}${(x,i,j)}$ query, for any $i\le j\le n$ and $x\in\mathcal U$, $\mathit{HIT}$ is able to compute the \emph{exact} answer  without error.
We first introduce some notations.
$x$ denotes the queried element; $\ensuremath{f_x[i]}$ indicates the frequency of item $x$ during the $i^{th}$ block, so that the newest block's index is $1$.
According to Line \ref{line:while}, we iterate over blocks, query $\mathit{level}_\ell$ table, which contains partial query of $2^{\ell}$ blocks, thus we can advance by $2^{\ell}$, Lines \ref{line:step} and \ref{line:block}.
The output of the algorithm for \mbox{querying $x$ in interval $i\le j\le n$ is:}
\ifdefined\NINEPAGES
{ 
	\begin{multline}
	\SIQ{}{(x,i,j)} = \sum_{i=last}^{first} \ensuremath{f_x[i]}
	= \sum_{i=1}^{first} \ensuremath{f_x[i]} - \sum_{i=1}^{last} \ensuremath{f_x[i]}\\
	= \sum_{i=1}^{(\OFFSET - j) \mod n} \ensuremath{f_x[i]} - \sum_{i=1}^{(\OFFSET - i) \mod n} \ensuremath{f_x[i]}. \label{eq_proof}
	\end{multline}\vspace*{-0.3cm}
}
\else
\begin{multline}
\SIQ{}{(x,i,j)} = \sum_{i=last}^{first} \ensuremath{f_x[i]}
= \sum_{i=1}^{first} \ensuremath{f_x[i]} - \sum_{i=1}^{last} \ensuremath{f_x[i]}
= \sum_{i=1}^{(\OFFSET - j) \mod n} \ensuremath{f_x[i]} - \sum_{i=1}^{(\OFFSET - i) \mod n} \ensuremath{f_x[i]}. \label{eq_proof}
\end{multline}
\fi

According to the definition of \xSetWindowFrequency[k] in section \ref{block_interval_definition}, we got that \eqref{eq_proof} is equal to:
$\xSetWindowFrequency[j] - \xSetWindowFrequency[i] \triangleq \xSetIntervalFrequency.$
\end{proof}
\begin{theorem}
	Denote the sum of cardinalities of the last $n$ blocks by $N$. Algorithm~\ref{alg:HIT} requires~$O(N\log{n} \logp{n|\mathcal U|})$~space.
\end{theorem}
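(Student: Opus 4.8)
The plan is to bound the total space as (number of table entries) $\times$ (bits per entry) and argue both factors are small. First I would fix the cost of a single entry: every entry stored in any table pairs an item identifier, costing $O(\log|\mathcal U|)$ bits, with a counter recording that item's (block) frequency over a range of at most $n$ blocks, hence a value at most $n$ that is representable in $O(\log n)$ bits. Each entry therefore occupies $O(\logp{n|\mathcal U|})$ bits, exactly as in the $\mathit{ACC_k}$ analysis, and it suffices to show that the total number of entries across all tables is $O(N\log n)$.

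Next I would bound the number of levels. Since a $\mathit{level}_\ell$ table of $\mathit{block}_i$ aggregates the $2^\ell$ blocks $\mathit{block}_{i-2^\ell+1},\ldots,\mathit{block}_i$ and the window holds only $n$ blocks, a nonempty $\mathit{level}_\ell$ can exist only for $\ell\le\log n$. Thus there are $O(\log n)$ levels in total.

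The heart of the argument is to show that the tables at any single level store at most $N$ entries combined. Fix a level $\ell$. The $\mathit{level}_\ell$ tables are precisely those associated with blocks whose index is a multiple of $2^\ell$ (equivalently $ctz(i)\ge\ell$), and the range $[i-2^\ell+1,i]$ covered by such a table is disjoint from the ranges of the other $\mathit{level}_\ell$ tables; together these ranges partition the $n$ blocks into groups of $2^\ell$ consecutive blocks. Because a $\mathit{level}_\ell$ table holds one entry per \emph{distinct} item appearing anywhere in its group (it is formed by merging lower tables in Line~\ref{line:hit_tables}), and the cardinality of a union of sets is at most the sum of the cardinalities, the number of entries in each $\mathit{level}_\ell$ table is at most the summed cardinalities of the blocks it covers. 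Summing over all $\mathit{level}_\ell$ tables, and using that their groups partition the window, the total number of level-$\ell$ entries is at most the sum of cardinalities of all $n$ blocks, namely $N$.

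Combining the three observations gives at most $N$ entries per level across $O(\log n)$ levels, i.e. $O(N\log n)$ entries, and multiplying by the $O(\logp{n|\mathcal U|})$ bits per entry yields the claimed $O(N\log n\,\logp{n|\mathcal U|})$ bound. I expect the per-level counting step to be the main obstacle: one must recognize that although each merge could naively re-store work already present at a lower level, subadditivity of set cardinality under union guarantees that every level's total is bounded by the $\mathit{level}_0$ total $N$, so the $O(\log n)$ levels contribute only a logarithmic overhead rather than a multiplicative blow-up in the entry count.
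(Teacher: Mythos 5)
Your proof is correct and takes essentially the same approach as the paper's: both are a double count of (appearance, table) incidences over the $O(\log n)$ levels, multiplied by the $O(\logp{n|\mathcal U|})$ bits per entry --- the paper phrases the count per appearance (``each element's appearance may reflect in $O(\log{n})$ tables''), while you transpose it and count at most $N$ entries per level. The disjointness of same-level ranges and subadditivity of cardinalities that you spell out is precisely the structural fact underlying the paper's terser statement.
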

\begin{proof}
	As described above, each element's appearance may reflect in $O(\log{n})$ tables.
	Every table entry takes $O(\log{|\mathcal U|})$ bits for the key and another $O(\log{n})$ for the value, and thus the overall space is $O(N\log{n}\logp{n|\mathcal U|})$.\qedhere
	
\end{proof}

\ifdefined\VLDBREVISION
\vspace{-5mm}
\changed{
\paragraph*{Optimizations}
In the full version~\cite{full-version} we provide optimizations that reduce HIT's space to $O(N(\log|\mathcal{U}|+\log{n}\log{N}))$ and that of $ACC_k$ to $O(N (\log|\mathcal{U}|+n^{1\over k} k \log{N}))$. We also discuss how to \emph{deamortize} the update process to get the worst-case time complexity \mbox{equivalent to the amortized analysis.}
}
\else
\subsection{Optimizations}

\ifdefined\EXTENDED
This section includes optimizations that can be applied to the $ACC_K$ and/or $HIT$ algorithms.
\else
Here we discuss how to optimize our algorithms.
\fi
\paragraph*{Short IDs}
\ifdefined\EXTENDED
Element IDs are often quite long (alternatively, $\mathcal U$ is large), e.g., a 5-tuple identifier per flow may take over 100 bits and Internet URLs can be even longer.
Hence, when the size of item IDs are large, we can reduce their required space as follows:
For each frame, we maintain an $O(N)$ sized array of items identifiers that were added to some block during the frame.
Every time a new (distinct) item arrives, we add it to the array.
To find the index of each ID in the array, we maintain an additional table that maps IDs to their array indices.
Clearly, the combined space requirement of the array and map table is $O(N\cdotpa{\log\mathcal{U} + \log N})$.
Finally, we replace the keys in the algorithms' tables (at all levels) such that instead of storing identifiers we use the array indices as keys.
Given a query, we first find the array index using the new table and then follow the same procedure as before, but with the index as key.
This optimization can be applied to both $ACC_K$ and $HIT$.
\else
Element IDs are often long; e.g., a 5-tuple identifier may take over 100 bits while Internet URLs can be even longer.
When the size of item IDs are large, we can reduce their required space as follows:
For each frame, we maintain an $O(N)$-sized array that contains the identifiers of items that were added to some block.
When a new (distinct) item arrives, we add it to the array.
To find the index of each ID in the array, we maintain an additional table that maps IDs to their array indices.
Clearly, the combined space requirement of the array and map table is $O(N\cdotpa{\log|\mathcal{U}| + \log N})$.
Finally, we replace the keys in the algorithms' tables (at all levels) such that instead of storing identifiers we use the array indices as keys.
Given a query, we first find the array index using the new table and then follow the same procedure as before, but with the index as key.
\fi
Thus, we always store at most $O(N)$ IDs.
This reduces HIT's space to $O(N(\log|\mathcal{U}|+\log{n}\log{N}))$ and that of $ACC_k$ to $O(N (\log|\mathcal{U}|+n^{1\over k} k \log{N}))$.

\paragraph*{Deamortization}
Algorithm \ref{alg:reduction} shows a reduction from the \SIProblem{} problem to \IFQ{}.
For any $\mathbb A$ algorithm that solves \SIProblem{}, we notice that the operation $\mathbb A.Add(x)$ cannot be called more than once a block for the same element $x$.
We can then \emph{deamortize} the $\mathbb A.Add(x)$ operation for reducing the worst case update time.
Namely, we can spread the time required for $Add(x)$ over an entire block. This means that if queried for $x$, we may miss $1$ from its block frequency. Nevertheless, this only adds an error of $\blockSize$ which can be compensated for by slightly \mbox{reducing the block~size.}
\fi

\ifdefined \NINEPAGES

\fi
%
\normalsize

\ifdefined\VLDBREVISION
\begin{figure*}[t]
	\center{
\ifdefined\NINEPAGES		
		\vspace*{-0.4cm}
\fi
		\begin{tabular}{cc}
			\subfloat{\includegraphics[width=\matrixCellWidth]
				{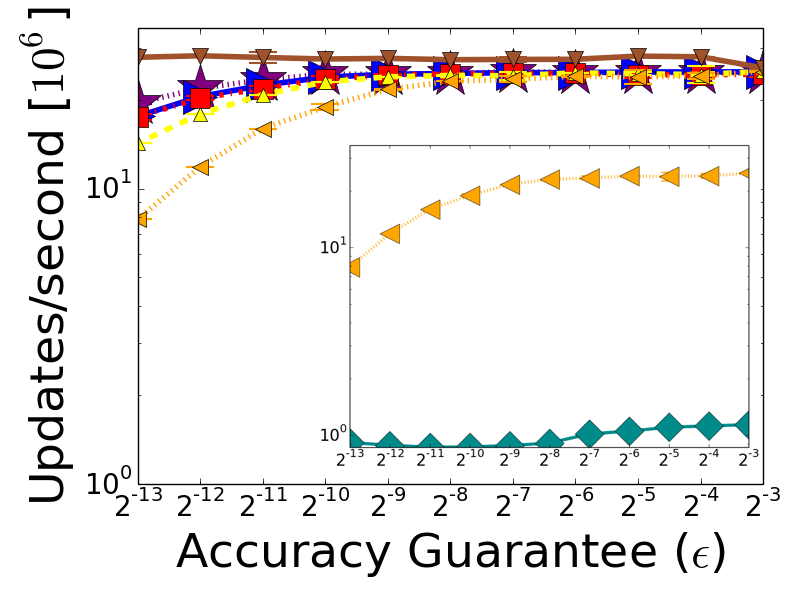}} &
			\subfloat{\includegraphics[width=\matrixCellWidth]{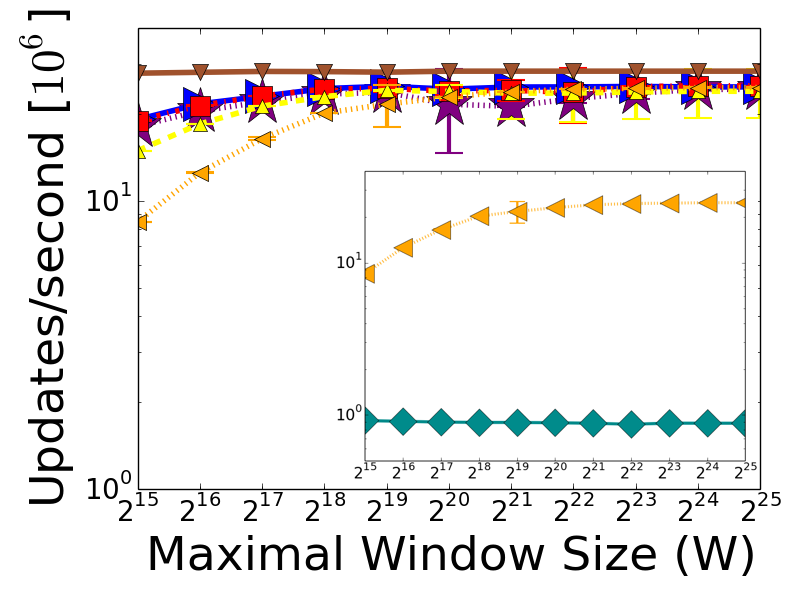}}
			\subfloat{\includegraphics[width=\matrixCellWidth]{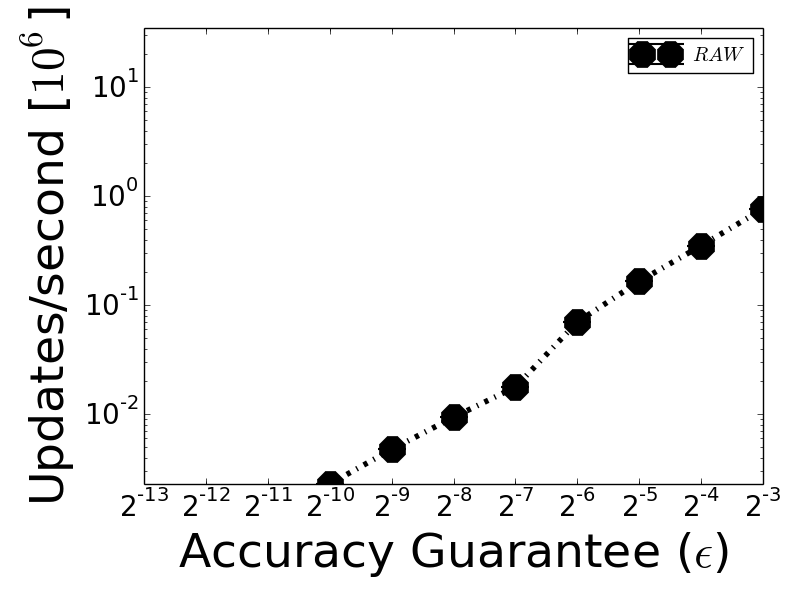}
			}
				\ifdefined\NINEPAGES
	\vspace*{-3mm}
	\fi
			\tabularnewline
            \multicolumn{2}{c}{\subfloat{\includegraphics[width = 12cm]
		{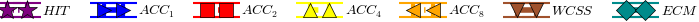}}}		
	\ifdefined\NINEPAGES
	\vspace*{-3mm}
	\fi
		\end{tabular}
		}
	\caption{\changed{Update operation runtime comparison as a function of the accuracy guarantee ($\epsilon$) and the maximum window size ($W$). Since the update speed of RAW is significantly slower than the other algorithms, we have placed it in a separate graph and we managed to run $\mathit{RAW}$ only up to $\epsilon = 2^{-10}$ due to its memory consumption limitation. The graphs plot contains a subplot which compares $\mathit{ECM}$ with our slowest algorithm, $\mathit{ACC_8}$ in this case. We see that $\mathit{ECM}$ is much slower than $\mathit{ACC_8}$. We also compared our algorithms with $\mathit{WCSS}$ which is the state of the art for the simpler problem \mbox{of a \emph{fixed} sliding window}.}}
	\label{fig:update}
\end{figure*}
\else

\begin{figure*}[t]
	\begin{tabular}{ccc}
		\subfloat[\changed{Backbone}]{\includegraphics[width=\matrixCellWidth]
			{IFQGraphs/epsilon/Update/NChicago16_Updates_3-14_new_epsilon.png}} &
		\subfloat[Datacenter]{\includegraphics[width = \matrixCellWidth]
			{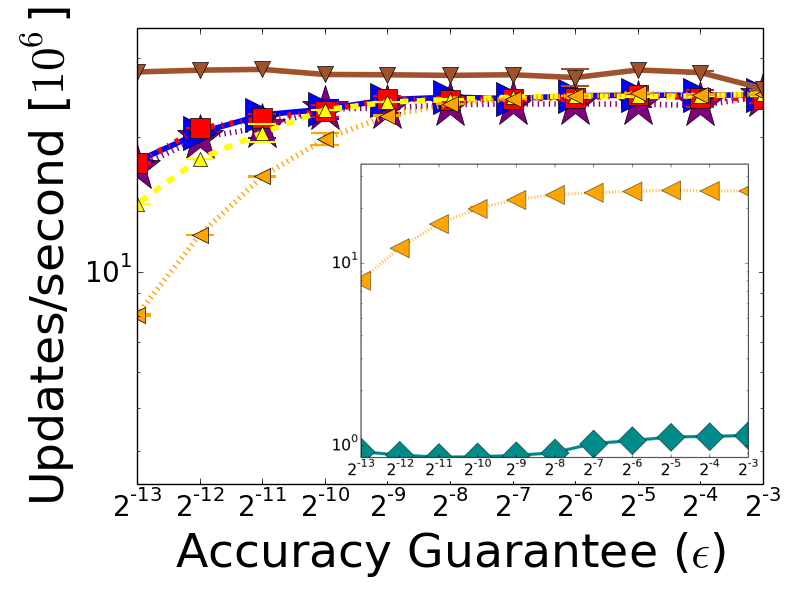}} &
		\subfloat[Edge]{\includegraphics[width = \matrixCellWidth]
			{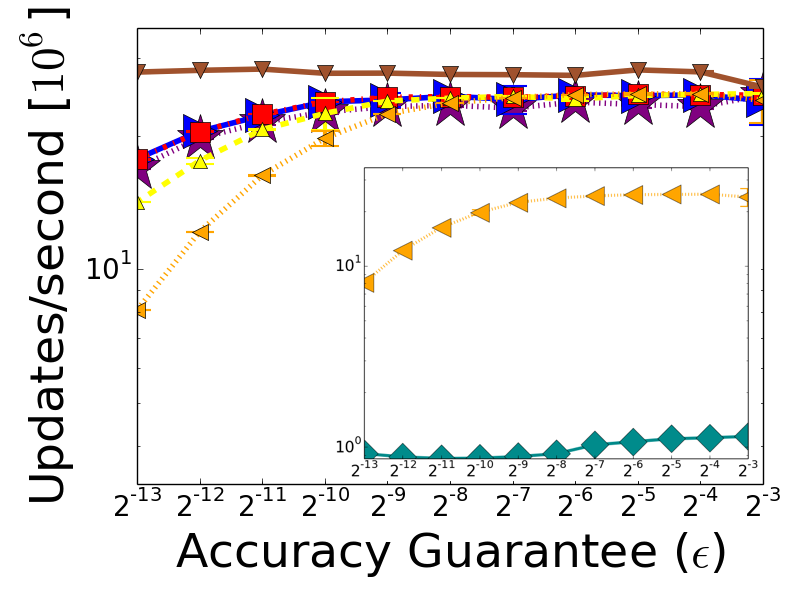}}
\ifdefined\NINEPAGES		
\\ 	[-4ex]
\else
\\
\fi				
\tabularnewline
\multicolumn{3}{c}{\subfloat{\includegraphics[width = 15cm]
		{IFQGraphs/epsilon/legend.png}}}		
\ifdefined\NINEPAGES		
\\ 	[-2ex]
\else
\\
\fi
\addtocounter{subfigure}{-1}		
		\subfloat[Backbone]{\includegraphics[width = \matrixCellWidth]
			{IFQGraphs/window/Update/NChicago16_updates_15-25without_ecm_new_window.png}} &
		\subfloat[Datacenter]{\includegraphics[width = \matrixCellWidth]
			{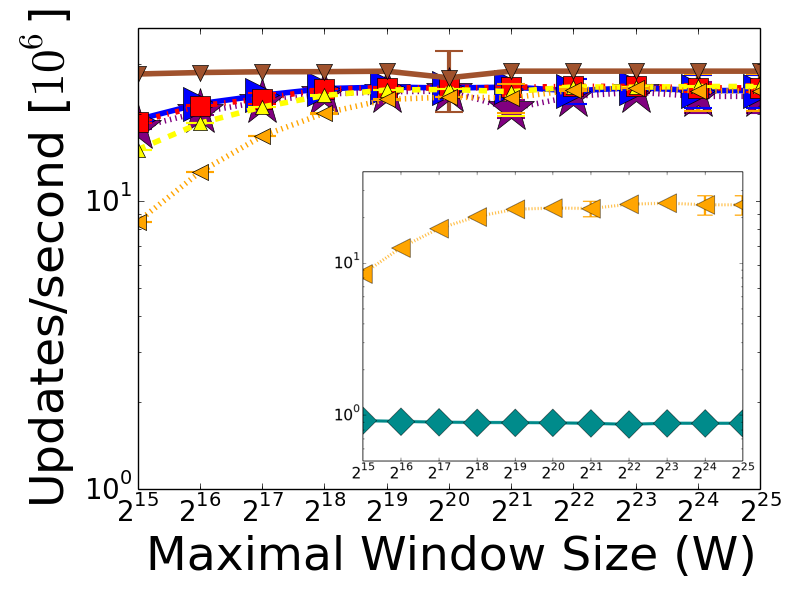}} &
		\subfloat[Edge]{\includegraphics[width = \matrixCellWidth]
			{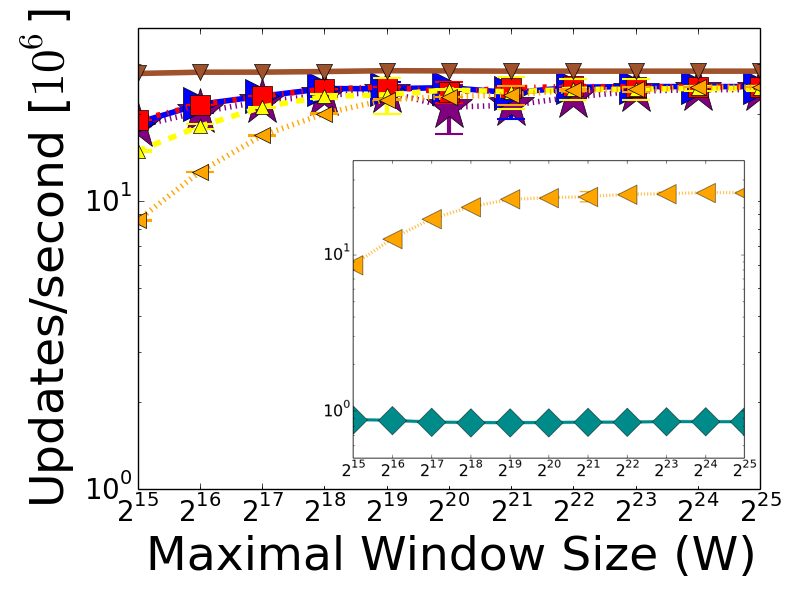}}\\
	\end{tabular}
	\caption{Update operation runtime comparison as a function of the accuracy guarantee ($\epsilon$) and the maximum window size ($W$). Since the update speed of RAW is slower than the other algorithms, we have placed it in figure \ref{fig:row_update}. Each plot contains a subplot which compares ECM with our slowest algorithm, $\mathit{ACC_8}$ in this case. We see that ECM is much slower than $\mathit{ACC_8}$. We also compare our algorithms with $\mathit{WCSS}$ which is the state of the art for the simpler problem \mbox{of a \emph{fixed} sliding window.}}
	\label{fig:update}
\end{figure*}

\begin{figure*}[t]
	\center{
		\begin{tabular}{cc}
			\subfloat{\includegraphics[width=\matrixCellWidth]{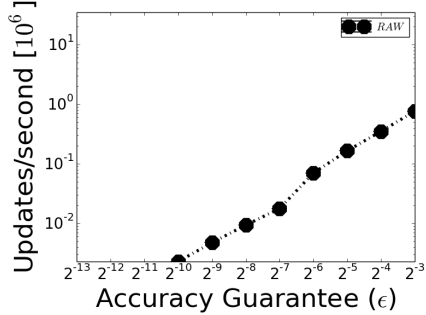}}
		\end{tabular}
	}
	\caption{Query operation runtime of $RAW$ algorithm comparison as a function of the accuracy guarantee ($\epsilon$) }
	\label{fig:row_update}
\end{figure*}
\fi

\ifdefined\VLDBREVISION
\begin{figure*}[t]
	\center{
\ifdefined\NINEPAGES		
		\vspace*{-0.4cm}
\fi
		\begin{tabular}{cc}
			\subfloat{\includegraphics[width=\matrixCellWidth]
				{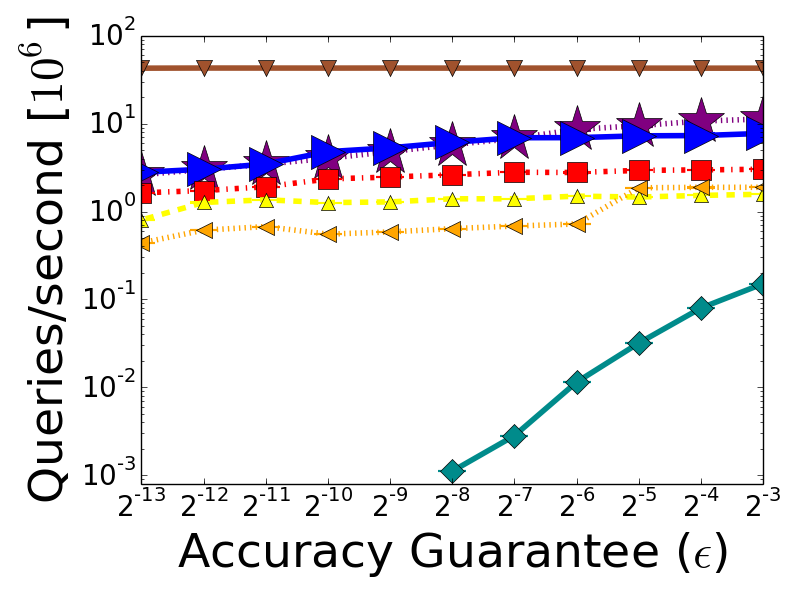}} &
			\subfloat{\includegraphics[width=\matrixCellWidth]{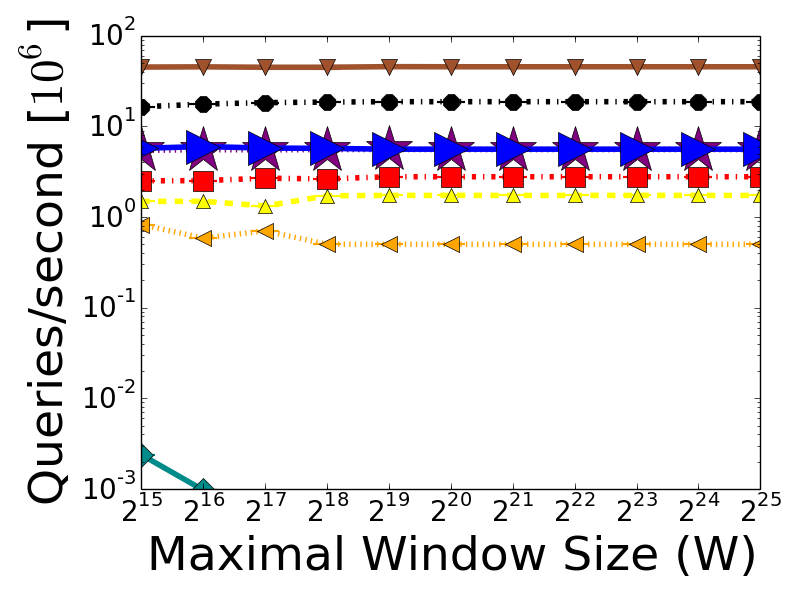}
				\ifdefined\NINEPAGES
	\vspace*{-3mm}
	\fi}
			\tabularnewline
			
            \multicolumn{2}{c}{\subfloat{\includegraphics[width = 12cm]
		{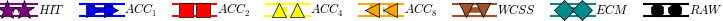}}}		
	\ifdefined\NINEPAGES
	\vspace*{-3mm}
	\fi
		\end{tabular}
		}
	\caption{\changed{Query operation runtime comparison as a function of the accuracy guarantee ($\epsilon$) and the maximum window size ($W$). When varying $\epsilon$, again, we managed to run $\mathit{RAW}$ only up to $\epsilon = 2^{-10}$ due to its memory consumption limitation. For graphs exploring varying $W$, we ran $\mathit{ECM}$ only up to $2^{16}$ due to time limitation. We compared our algorithms with $\mathit{WCSS}$ since it is the state of the art for the simpler problem of a \emph{fixed} sliding window but it can only answer queries \mbox{with fixed window size.}}
		\ifdefined\NINEPAGES
	\vspace*{-3mm}
	\fi}
	\label{fig:query}
\end{figure*}
\else
\begin{figure*}[t]
	\begin{tabular}{ccc}
		\subfloat[\changed{Backbone}]{\includegraphics[width=\matrixCellWidth]
			{IFQGraphs/epsilon/Query/NChicago16all_Queries_3-14all_epsilon.png}} &
		\subfloat[Datacenter]{\includegraphics[width = \matrixCellWidth]
			{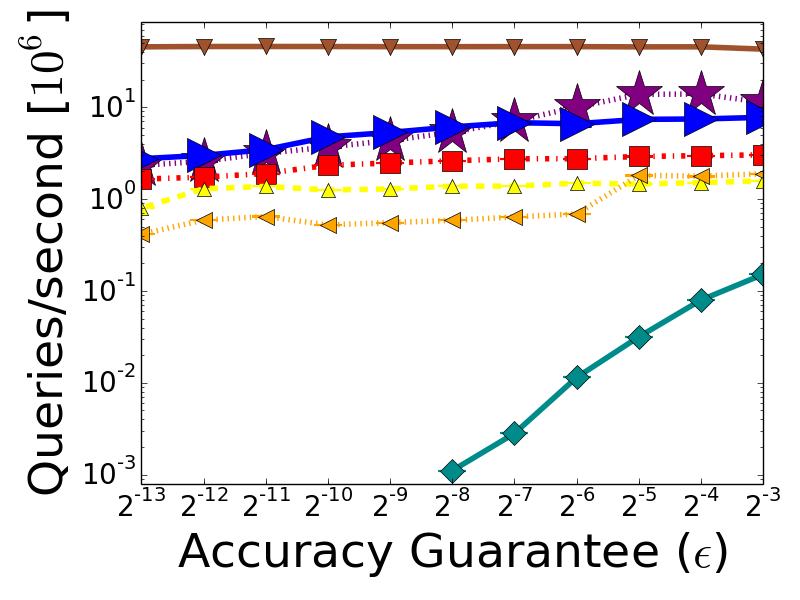}} &
		\subfloat[Edge]{\includegraphics[width = \matrixCellWidth]
			{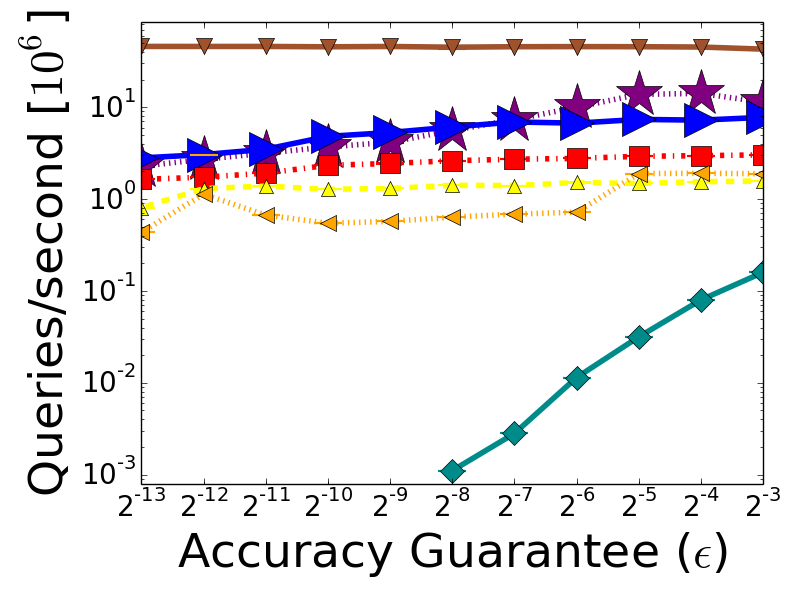}}
\ifdefined\NINEPAGES		
		\\ 	[-4ex]
\else
\\
\fi
\tabularnewline
\multicolumn{3}{c}{\subfloat{\includegraphics[width = 15cm]
		{IFQGraphs/epsilon/legend_w_raw.png}}}		
\ifdefined\NINEPAGES		
\\ 	[-2ex]
\else
\\
\fi
\addtocounter{subfigure}{-1}		
		\subfloat[Backbone]{\includegraphics[width=\matrixCellWidth]
			{IFQGraphs/window/Query/NChicago16_updates_15-25all_window.png}} &
		\subfloat[Datacenter]{\includegraphics[width = \matrixCellWidth]
			{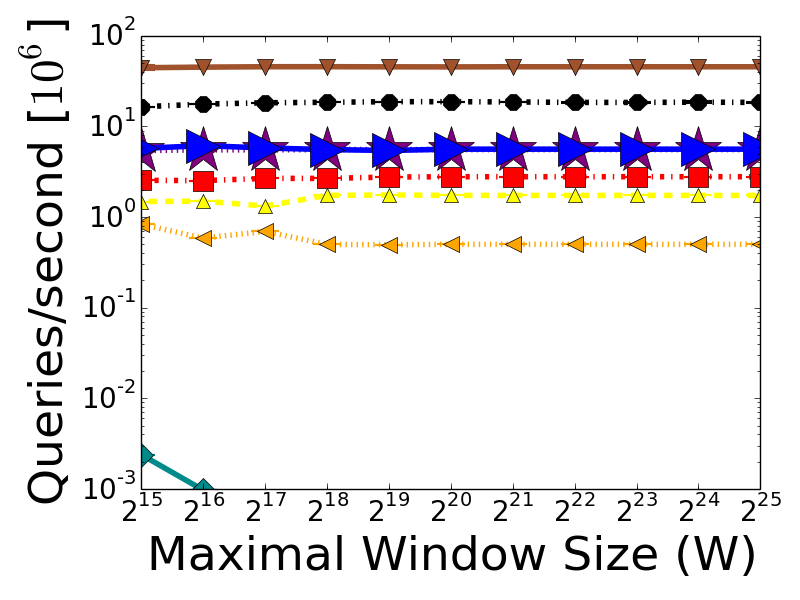}} &
		\subfloat[Edge]{\includegraphics[width = \matrixCellWidth]
			{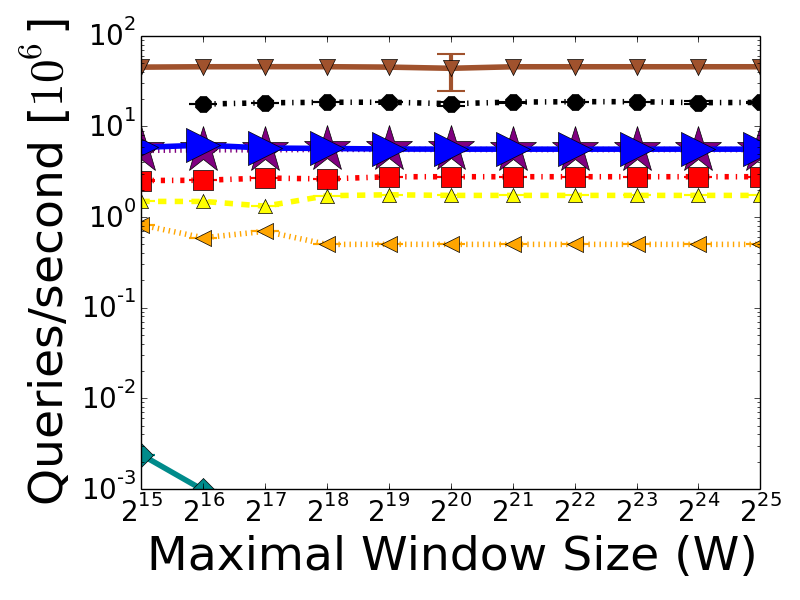}}
	\end{tabular}
	\caption{Query operation runtime comparison as a function of the accuracy guarantee ($\epsilon$) and the maximum window size ($W$). For graphs as function of $\epsilon$, we managed to run $\mathit{RAW}$ only up to $\epsilon = 2^{-8}$ due to its memory consumption limitation. For graphs as function of $W$,we run $\mathit{ECM}$ only up to $2^{16}$ due to time limitation.We compare our algorithms with $\mathit{WCSS}$ since it is the state of the art for the simpler problem of a \emph{fixed} sliding window but it can only answer queries \mbox{with fixed window size.}}
	\label{fig:query}
\end{figure*}
\fi

\begin{figure*}[t]
	\center{
\ifdefined\NINEPAGES		
		\vspace*{-0.4cm}
\fi
		\begin{tabular}{cc}
			\subfloat{\includegraphics[width=\matrixCellWidth]
				{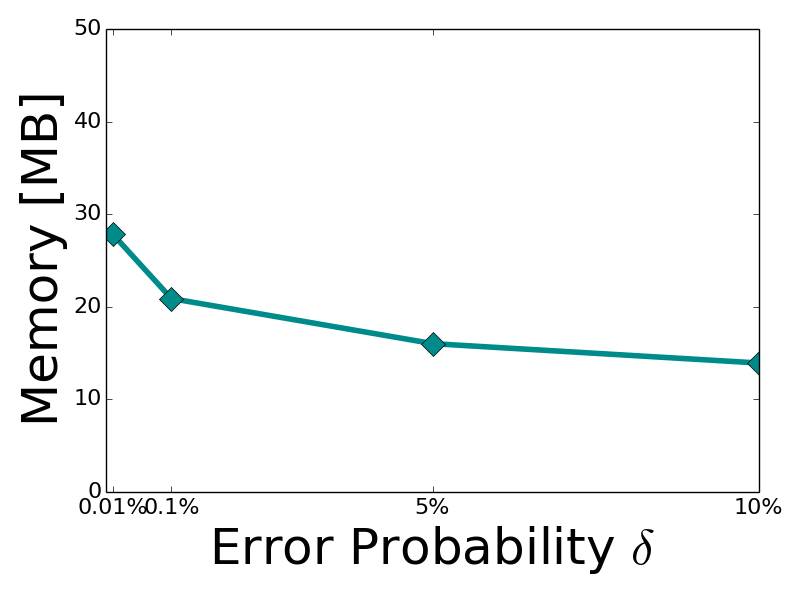}} &
			\subfloat{\includegraphics[width=\matrixCellWidth]{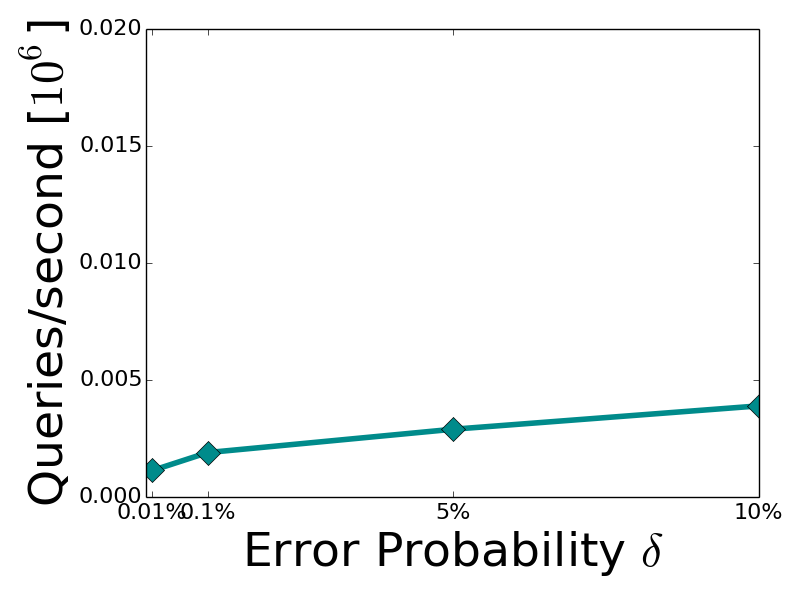}}
			\subfloat{\includegraphics[width=\matrixCellWidth]{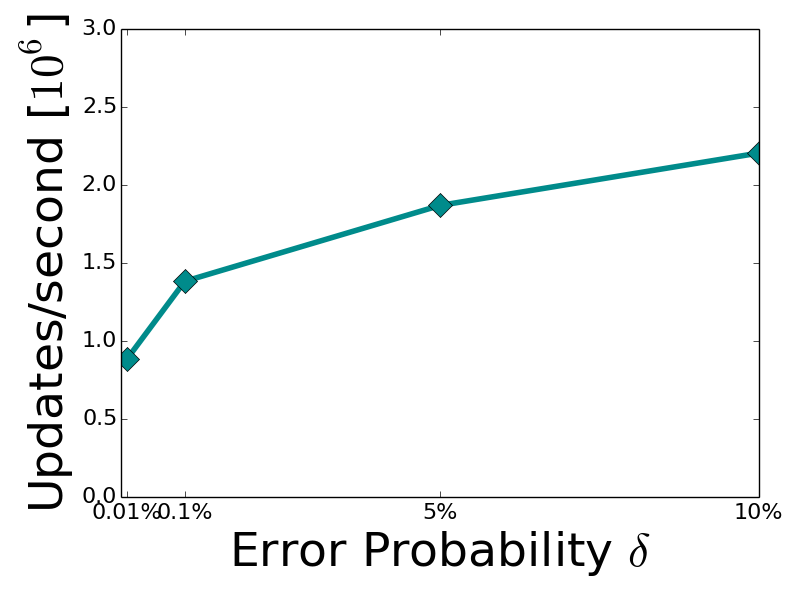}}
		\end{tabular}
		}
	\caption{\changed{$\mathit{ECM}$ space and performance comparison as functions of the error probability $\delta$ using \emph{Backbone} dataset, $\eps = 2^{-8}$ and window of size $2^{20}$. Note that the $y$-axes of these graphs is in linear scales.}}
	\label{fig:ecm_delta}
\end{figure*}

\begin{figure*}[t]
	\center{
\ifdefined\NINEPAGES		
		\vspace*{-0.4cm}
\fi

		\begin{tabular}{cc}
			\subfloat[Vary Interval sizes]{\includegraphics[width=\matrixCellWidth]{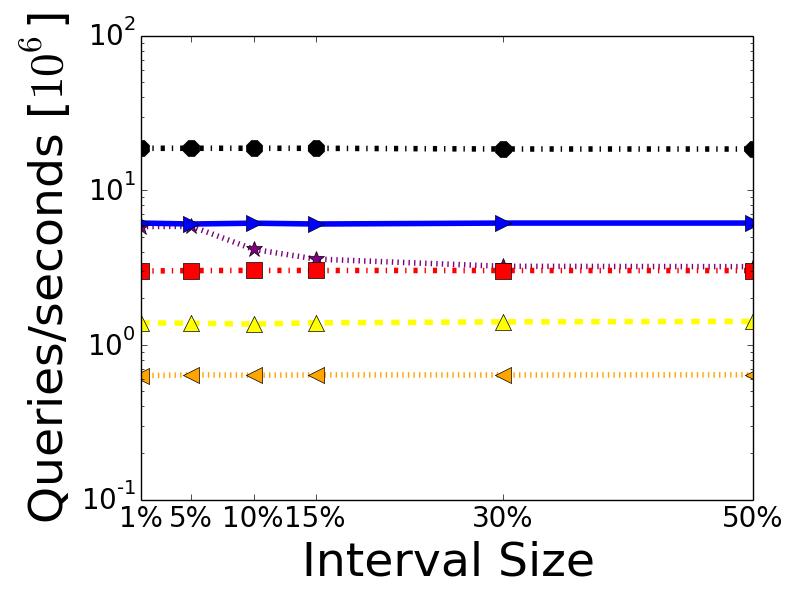}\label{vary_intervals}} &
			\subfloat[Observed Error]{\includegraphics[width=\matrixCellWidth]{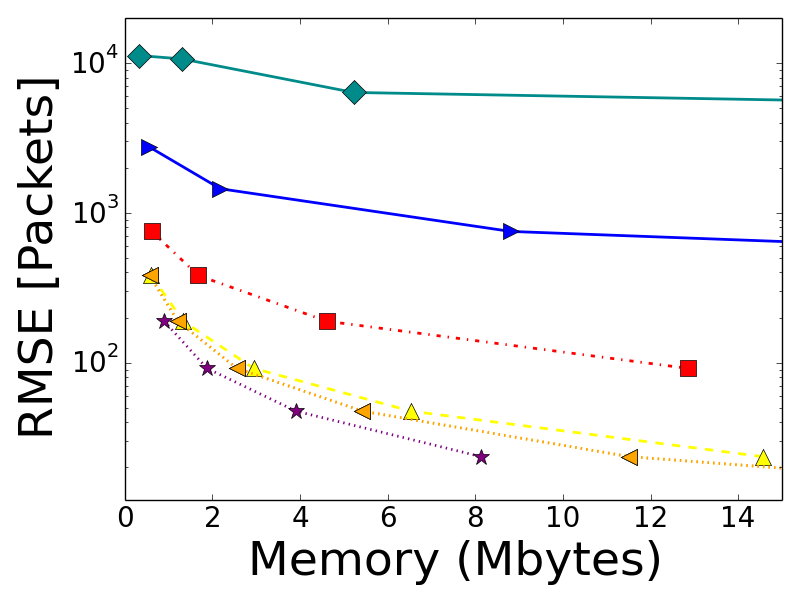}\label{emp_error}}
			\subfloat[Memory Consumption]{\includegraphics[width=\matrixCellWidth]{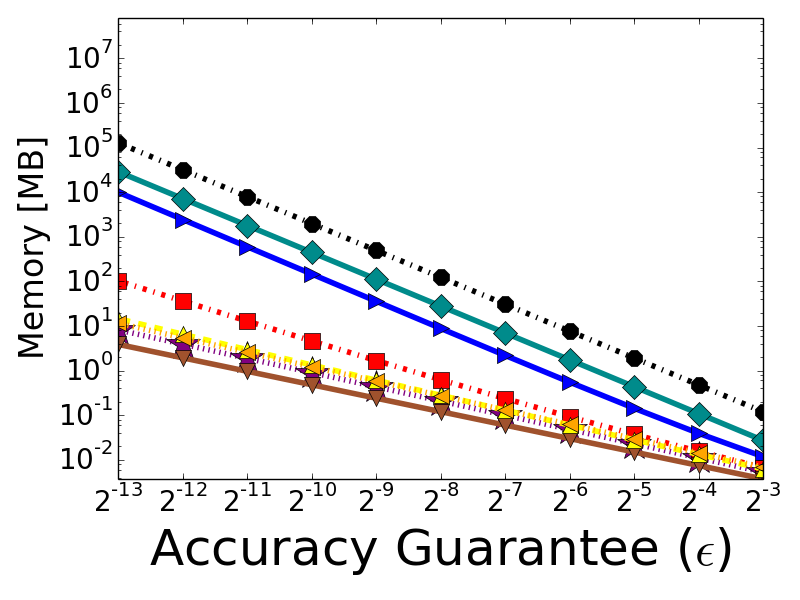}\label{memory}}
			\tabularnewline
            \multicolumn{2}{c}{\subfloat{\includegraphics[width = 12cm]
		{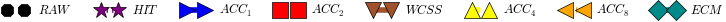}}}	
		\ifdefined\NINEPAGES
\vspace*{-3mm}
\fi
		\end{tabular}

		}
	\caption{\changed{(a) Query operation runtime comparison as a function of interval size (b) Root Mean Square Error comparison as a function of required memory and maximum window of size $2^{20}$ (c) Algorithms space comparison as a function of the accuracy guarantee ($\epsilon$).}
	\ifdefined\NINEPAGES
	\vspace*{-3mm}
	\fi
	}
	\label{fig:vary_intervals}
\end{figure*}

\section{Evaluation}
\label{sec:eval}
We developed a C++ prototype of all algorithms described in this work: $\mathit{HIT}$, $\mathit{RAW}$, and instantiations of the $\mathit{ACC_k}$ protocols for $k=1,2,4,8$.
Here, the $\mathit{HIT}$ and $\mathit{ACC_k}$ algorithms are implemented using \changed{Space Saving~\cite{SpaceSavings}} as a building block.
Besides, we also implemented $\mathit{ECM}$-$\mathit{Sketch}$ ~\cite{papapetrou2015sketching} (a.k.a $\mathit{ECM}$) in C++ for comparison because the authors' code is in Java.
$\mathit{ECM}$ \changed{was} configured for error probability $\delta=0.01$\%.
\changed{As Table~\ref{tbl:comparison} shows, $\delta$ affects the space and performance of ECM. Specifically, the memory, update time, and query time, all logarithmically depend on $\delta^{-1}$. While the actual value of $\delta$ is application dependent, for performing a drill-down query (which translates into muliple interval queries), one may need $\delta$ to be quite small so that the overall error probability will be acceptable. Figure~\ref{fig:ecm_delta} shows $\mathit{ECM}$ space consumption and update time as functions of $\delta$. As expected, as $\delta$ increases, update and query operations become faster and $\mathit{ECM}$ consumes less space but the overall error probability is higher.}
We also \changed{compared} with the $\mathit{WCSS}$ algorithm~\cite{WCSS} as a general baseline since it is the state of the art for the more straightforward problem of a \emph{fixed} sliding window. Here again, we implemented $\mathit{WCSS}$ in C++ as its authors implemented it in Java.
For each algorithm, we \changed{evaluated} the speed of executing \IFQ{}
${(x,i,j)}$ and \text{ADD}${(x)}$ operations, as well as its memory requirements.



The evaluation \changed{was} performed on an Intel(R) 3.20GHz Xeon(R) CPU E5-2667 v4 running Linux with kernel 4.4.0-71.
Each data point in all runtime measurements is shown as a 95\% confidence interval of 10 runs.
\ifdefined\EXTENDED
\subsection{Datasets}
\fi
Our evaluation includes \changed{a \emph{Backbone} dataset collected during 2016 from the backbone router `equinix-chicago'~\cite{CAIDACH16}. In the full version~\cite{full-version} we show these graphs with two additional packet traces (a data-center and an edge router) with very similar results}. 

\subsection{Update Speed Comparison}
\label{sec:update_op}\vspace{3mm}
Figure~\ref{fig:update} compares the update speed. We start by exploring the trade-off of $\epsilon$ parameter with a fixed maximal window of size $W=2^{20}$.
Then, we explain the trade-off of window size parameter with a fixed $\epsilon = 2^{-8}$.
\subsubsection{Effect of $\epsilon$ on \changed{Update Time}}
Throughout, as $\epsilon$ decreases, more tables must be updated on every overflowed element.
Thus, update operations become slower when $\epsilon$ decreases.
%
%
As depicted, $\mathit{HIT}$ update performance is close to $ACC_1$ and $ACC_2$.
As $k$ increases, there are more tables to update on every overflowed element, so the performance decreases.
This difference becomes especially noticeable with small $\epsilon$ values.

Recall that every update operation in $\mathit{RAW}$ means $4\epsilon^{-1}$ add operations, one for every $4\epsilon^{-1}$ instances of the $\mathbb A(\cdot, \eps/4)$
\ifdefined\EXTENDED
algorithm, which is $\mathit{WCSS}$ in our implementation.
\else
algorithm, which is $\mathit{WCSS}$ in our implementation.
\fi
So, as $\epsilon$ decreases, update operations take more time.
Since the update speed of RAW is orders of magnitudes slower than the other algorithms, we have placed it in a separate \changed{graph} in which we managed to run this only for $\epsilon \ge 2^{-10}$ due to space limitation on the server.
This echoes Table~\ref{tbl:comparison}, which presents the analytical performance summary of the different algorithms.
Among our algorithms depicted in Figure~\ref{fig:update}, the slowest one is $\mathit{ACC_8}$,
as can be seen in the inner graphs, even $\mathit{ACC_8}$ processes items $57$-$210$ times faster than $\mathit{ECM}$.
\subsubsection{Effect of Window Size on \changed{Update Time}}
Figure~\ref{fig:update} shows also the effect of window size when $\epsilon$ is fixed to $2^{-8}$.
All algorithms perform better when the window size is larger
as this means fewer blocks and table accesses.
The $\mathit{ACC_k}$ algorithms get slower as $k$ increases as they need to update more tables.
Again, we compared the most inefficient algorithm $\mathit{ACC_8}$ with $\mathit{ECM}$ in the inner graphs;
$\mathit{ACC_8}$ processes items $50$-$218$ times faster than $\mathit{ECM}$ for the given $\epsilon$~values.

\subsection{Query Speed Comparison}
\label{sec:query_op}
For query speed comparison, we \changed{chose} random intervals, each of size $1$\% of the total window's size.
We begin the evaluation by exploring the impact of the $\epsilon$ parameter with a fixed window of size $2^{20}$.
Then, we explain the trade-off of the window size parameter with fixed $\epsilon = 2^{-8}$.
The performance of the improved algorithms is compared with the existing work, $\mathit{ECM}$, and $\mathit{WCSS}$, recall that $\mathit{WCSS}$ can only answer queries with fixed window size.
\subsubsection{Effect of $\epsilon$ on \changed{Query Time}}
As shown in Figure~\ref{fig:query}, $\mathit{RAW}$ is the fastest as each interval query is translated to two $\mathit{WCSS}$ queries.
We managed to run $\mathit{RAW}$ only up to $\epsilon = 2^{-10}$ due to its memory consumption limitation (see section \ref{sec:memory}).

$\mathit{HIT}$ computes any block interval frequency by using the hierarchical tree tables, greedily choosing the highest possible level each time.
For decreasing $\epsilon$ values, the blocks numbers increases, so the queried interval crosses more blocks and accesses more tables.
Consequently, we \changed{got} slower interval query operations.

For the $\mathit{ACC_k}$ algorithms, for increasing $k$ values we get fewer queries per seconds as we read more tables on average.
For example, $\mathit{ACC_1}$ computes any block frequency by querying at most $3$ tables, while $\mathit{ACC_2}$ does the same by accessing no more than
\ifdefined\EXTENDED
$5$ tables as explained in Section~\ref{sec:acc}.
\else
$5$ tables as explained in Section~\ref{sec:acc}.
\fi
Query operations runtime depends also on the interval itself;
there are ``good'' intervals in which the corresponding blocks have table at level $k - 1$, so one table access for each is sufficient.
Therefore, we \changed{chose} random intervals for every query.
As $\epsilon$ value decreases, block sizes become smaller and the number of tables grow. 
In this case, not all the tables fit in memory and we \changed{experienced} paging that causes lower query performance.
Recall that while $\mathit{WCSS}$ is the fastest, it solves the much simpler problem of a fixed window size and only serves as a best case reference~point.
$\mathit{ECM}$ answers queries in a very inefficient way compared to our algorithms.
We only run $\mathit{ECM}$ up to \changed{$\epsilon = 2^{-8}$} due to time limitation.
As expected, its performance decreases for decreasing $\epsilon$ values.

\subsubsection{Effect of Window Size on \changed{Query Time}}
As mentioned before, we \changed{evaluated} queries by choosing random intervals of size 1\% of window's size, when $\epsilon$ is fixed to $2^{-8}$.
Figure~\ref{fig:query} shows that all algorithms' query performance is not very sensitive to the window size.
This is because the number of tables accessed depends on the ratio between the interval and window sizes.
We \changed{ran} $\mathit{ECM}$ only up to \changed{$2^{16}$} due to time limitation.
The performances of our algorithms are orders of magnitudes better than $\mathit{ECM}$ also in this case.

\subsection{Memory Consumption Comparison}
\label{sec:memory}
Figure~\ref{memory} shows the space consumed by our algorithms as well as $\mathit{ECM}$
for a given $\epsilon$ value.
As seen, the smaller $\epsilon$ gets, all algorithms consume more space.
We can see that $\mathit{ECM}$ is more compact than $\mathit{RAW}$ but consumes more space than the others.
As mentioned before, RAW maintains $4\oneOverE$ separate $\mathit{WCSS}$ instances so its space consumption is the largest.
For the $\mathit{ACC_k}$ algorithms, as $k$ increases, the overall number of tables entries for overflowed elements decreases resulting is better space consumption.
So there is a trade-off between the speed and required spaces by adjusting the parameter $k$
, using Figure~\ref{memory} and figures \ref{fig:update},\ref{fig:query} can help choosing $k$ parameter according to the desired speed and memory consumption.
Yet, $\mathit{ECM}$ consumes more space than $ACC_1$ which has the highest memory consumption among the $\mathit{ACC_k}$ algorithms family.
As shown, the memory consumption of $\mathit{ACC_8}$ is close to $\mathit{HIT}$.
Yet, $\mathit{HIT}$ is the most efficient among the algorithms that solve \SIProblem{} because its data structure is the most compact \changed{but its query performance affected by the interval size as explained in section \ref{sec:intervals} so for larges interval sizes we may prefer $ACC$ algorithm over $\mathit{HIT}$}.
Recall that while $\mathit{WCSS}$ is the most compact algorithm in term of space, it solves the much simpler problem of a fixed window size and only serves as a best case reference~point.
\changed{
\subsection{Interval Size Comparison}
\label{sec:intervals}
Figure~\ref{vary_intervals} shows query runtime performance of our algorithms as a function of interval size.
The query operation performance was measured with random intervals of varying sizes: $1$\%, $5$\%, $10$\%, $15$\%, $30$\%, or $50$\% of the total window's size while fixing $\eps=2^{-8}$ and $W=2^{20}$.
}

\changed{
As expected, the query performance of $\mathit{HIT}$ gets slower as the size of the interval gets larger, because the queried interval crosses more blocks and accesses more tables.
In contrast, query performance of $\mathit{ACC_k}$ algorithms is not affected by interval size as $\mathit{ACC_k}$ algorithms consider only the edges of the queried interval. That is, when the edges are $i$ and $j$, they compute the frequency of the given item from the beginning of the frame till $\mathit{block}_i$ and $\mathit{block}_j$ and subtract the results.
$\mathit{RAW}$ algorithm query performance is not affected by interval size since it is translated to two $\mathit{WCSS}$ queries regardless of interval size.
$\mathit{ECM}$ algorithm is not included in the graph since it is orders of magnitudes slower than our algorithms so we will not see the difference between them (see figure \ref{fig:query}). $\mathit{ECM}$ query performance is affected by interval size since its query operation depends on Exponential Histograms~\cite{DatarGIM02} query. As interval size gets larger, Exponential Histograms scans a larger sequence of \emph{buckets} and as a result, $\mathit{ECM}$ query gets slower.}

\changed{In conclusion, when the interval size is big, we would prefer to choose $\mathit{ACC_1}$ over $\mathit{HIT}$ when there is sufficient memory. We expect that as data rates and volumes get higher, one would use smaller $\eps$ values, making $\mathit{ACC_k}$ increasingly more attractive also for larger $k$ values.}
\changed{\subsection{Root-Mean-Square Error Comparison}
\label{sec:emp_error} Figure~\ref{emp_error} shows the empirical Root Mean Square Error ($\mathit{RMSE}$) in correlation with the required memory for $\mathit{ACC_k}$ algorithms, $\mathit{HIT}$ and $\mathit{ECM}$ with window of size $2^{20}$.}
\changed{The observed errors are lower than the user-selected value $\eps$. 
Since $\mathit{ACC_k}$ algorithms and $\mathit{HIT}$ solve the same $n$-Interval instance, their empirical error is equal for same $\eps$ values. 
The difference between the algorithms comes from the memory requirements which differ for the same $n$ value.
In general, a lower space consumption required for a specific $\eps$ value translates into better empirical error.
For example, $\mathit{ACC_1}$ consumes more memory than $\mathit{ACC_2}$ for the same $\eps$. Thus, for a given memory budget, $\mathit{ACC_2}$ is more accurate than $\mathit{ACC_1}$ and $\mathit{HIT}$ is more accurate than both. $\mathit{ECM}$ was measured with error probability $\delta=0.01$\% which led to large memory consumption relative to $\mathit{HIT}$ and $\mathit{ACC_k}$ algorithms; as a result its empirical error higher than others but yet lower than the theoretical value.}


\section{Extensions and Applications}
\label{sec:extensions}
Here, we briefly discuss how our solutions can be applied to temporal queries,  weighted stream, distributed stream, heavy hitters, and hierarchical heavy hitters (HHH).
\ifdefined\UNDEF
 We expand on both subjects in the full version of the paper~\cite{full-version}.

First, we describe how to extend the framework to \emph{time intervals}. That is, consider a router that processes packets and wishes to find the most frequent flows between time $t_1$ and $t_2$, measured in \emph{seconds}. While the above algorithms are designed for packet-count intervals, they can be used for time intervals as well. To achieve this, we feed the number of packets that arrived in each second into a \emph{Sliding Ranker}~\cite{slidingRanker}. The Sliding Ranker allows us to estimate the number of packets that arrived in a certain time interval (with a small additive error). Given that packet-count estimate, we can query our algorithm and return an answer with respect to the time interval.

Next, we describe how our algorithms can be used for answering interval HHH queries (see~\cite{HHHMitzenmacher} for formal definitions). In~\cite{HHHMitzenmacher}, Mitzenmacher et al. proposed combining their approach (which originally utilized Space Saving~\cite{SpaceSavings}) with sliding window algorithms such as~\cite{HungLT10,WCSS} to solve HHH on sliding windows. However, such an approach yields a fixed window size algorithm. By replacing the underlying black box algorithm by our interval query solutions, we get an algorithm that solves HHH on interval queries.
\else
\subsection{Time Based Intervals}
In this section, we describe how to extend our algorithms for supporting time intervals.
The idea is that sometimes what matters is the flow frequencies in a time interval rather than during a item-count interval.
For example, if we want to allow a user to make 100 queries/sec to an API, we need to measure the number of times this user has accessed the system during the last second.
In such a setting, we consider a \emph{timed stream} $\mathcal S = \langle x_1,t_1\rangle, \langle x_2,t_2\rangle,\ldots \in (\mathcal U \times N)^*$.
Here, each item has an integer valued timestamp and we assume that the items arrive in order, $i.e.$, $t_1\le t_2\le \ldots$.

We also assume that the number of elements that arrive in a single time-frame is bounded by $R\in \mathbb N$.
In practice, this is a reasonable assumption; for example, if we perform the measurement over a 1Gbps link, and each item must be of size of at least 64 bytes for its headers, then we can set $R\triangleq 10^{9}/(8\cdot 64) = 2M$ [items / second].
We denote by \xTimeWindowFrequency{} the frequency of $x$ within the last $t$ timestamps.
That is, if the query time is $T$, then $\xTimeWindowFrequency\triangleq |\set{\langle x,t_i\rangle\in \mathcal S \mid t_i \ge T-t}|$.
Similarly, we define the frequency within a time window as $\xTimeIntervalFrequency\triangleq \xTimeWindowFrequency[i] - \xTimeWindowFrequency[j]$.
The time based interval algorithms goal is then to answer the following queries:
\begin{itemize}
	\item \BTIFQ{}$\bm{(x,i,j)}$: given an element $x\in\mathcal U$ and indices $j\le i\le W$, return an estimate $\xTimeIntervalFrequencyEstimator$ of $\xTimeIntervalFrequency$.
\end{itemize}
Finally, if an algorithm's error is at most an $\eps$ fraction of the overall possible traffic in $\Tau$ time, we say that it solves the \TIFProblem{} problem.
That is, its estimation needs to satisfy
$$\forall j\le i\le \Tau: \xTimeIntervalFrequency\le \xTimeIntervalFrequencyEstimator \le \xTimeIntervalFrequency + \Tau\cdot R\cdot\eps.$$
Our construction has two parts:
We maintain a \ParameterizedIFProblem{\Tau\cdot R}{\eps/2} solution in addition to a data structure that translates time intervals into item intervals.
For this, we use Ben Basat's \emph{Sliding Ranker} (SR) algorithm~\cite{slidingRanker} that can compute a sliding window sum over an integer stream, where the size of the window is given at query time.
SR has parameters $\angles{R,\mathfrak W,\Delta}$; it processes a stream in $\frange{R}$ such that upon a query for some $i\le \mathfrak W$, it computes a $\Delta$-additive approximation for the sum of the last $i$ elements.
Every timestamp, we feed the \emph{number of items} that arrived into an SR with parameters $\angles{R,\Tau,\Tau\cdot R\eps/2}$.
Given a time-interval query $x,i,j$, we use the SR for computing the number of items sent since time $i$ and from time $j$.
We then use these estimations to query the \IFText{} instance for the estimated item-interval.
Since the SR and \IFText{} each has an error of $\Tau\cdot R\eps/2$, we satisfy the error guarantee.
The memory consumption of SR for $\Delta = \Theta(R\mathfrak W)$ is just $O(R\mathfrak W\Delta+\log \mathfrak W)=O(\oneOverE+\log \mathfrak W)$ bits.

\subsection{Supporting Heavy-Hitters}
We now show how one can use the described algorithms to support heavy hitters queries over a given interval. 
Denote by $\IntervalHH\triangleq\IHHDef$ the set of heavy hitters items that appeared at least a $\theta$ fraction of the queried interval for given integers $i\le j\le W$ and a real number $\theta\in[0,1]$.
 \BIHHQ{}$\bm{(\theta,i,j)}$ operation returns an estimate $\IntervalHHEstimator\subseteq\mathcal U$ that approximates $\IntervalHH$ given indices $i\le j\le W$.
The algorithms solves \BIHHQ{}$\bm{(\theta,i,j)}$ and guarantees  $$\IntervalHH\subseteq \IntervalHHEstimator \subseteq \set{x\in\mathcal U\mid \xIntervalFrequency\ge \intervalHHThreshold - \epsError}.$$
That is, the estimated set must contain all elements that appear at least a $\theta$ fraction of the interval and must not have any members whose frequency is lower than $\theta\cdotpa{j-i} - \epsError$.
Given that the described algorithms solve the \IFProblem{} problem, by the following observation they also solve the \IHHProblem{} problem.

\begin{observation}
	Any algorithm $\mathbb A$ that solves \IFProblemD{} can answer an \IHHQ{} by returning
	\ifdefined\EXTENDED
	$$\IntervalHHEstimator\triangleq \set{x\in\mathcal U\mid \xIntervalFrequencyEstimator\ge\intervalHHThreshold}.$$
	\else
	$\IntervalHHEstimator\triangleq \{x\in\mathcal U\mid \xIntervalFrequencyEstimator\ge\intervalHHThreshold\}.$
	\fi
\end{observation}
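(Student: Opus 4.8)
The plan is to verify the two set inclusions that constitute the \IHHProblem{} guarantee, namely $\IntervalHH\subseteq \IntervalHHEstimator$ and $\IntervalHHEstimator \subseteq \set{x\in\mathcal U\mid \xIntervalFrequency\ge \intervalHHThreshold - \epsError}$, by directly invoking the two-sided additive accuracy of $\mathbb A$. Recall that since $\mathbb A$ solves \IFProblemD{}, every query returns an estimate satisfying $\xIntervalFrequency \le \xIntervalFrequencyEstimator \le \xIntervalFrequency + \epsError$, and that the reported set is $\IntervalHHEstimator = \set{x\in\mathcal U\mid \xIntervalFrequencyEstimator\ge\intervalHHThreshold}$. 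Each inclusion will follow from one side of this sandwich.

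For the first (completeness) inclusion, I would fix any $x\in\IntervalHH$, so that $\xIntervalFrequency\ge\intervalHHThreshold$ holds by the definition of \IntervalHH{}. The lower bound $\xIntervalFrequencyEstimator\ge\xIntervalFrequency$ of the accuracy guarantee then immediately yields $\xIntervalFrequencyEstimator\ge\intervalHHThreshold$, which is exactly the membership condition for $\IntervalHHEstimator$; hence $x\in\IntervalHHEstimator$. Intuitively, because $\mathbb A$ never underestimates, it cannot push a genuine heavy hitter below the reporting threshold, so no true heavy hitter is ever omitted.

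For the second (soundness) inclusion, I would fix any $x\in\IntervalHHEstimator$, so that $\xIntervalFrequencyEstimator\ge\intervalHHThreshold$ by construction of the reported set. The upper bound $\xIntervalFrequencyEstimator\le\xIntervalFrequency+\epsError$ rearranges to $\xIntervalFrequency\ge\xIntervalFrequencyEstimator-\epsError\ge\intervalHHThreshold-\epsError$, placing $x$ inside the claimed superset $\set{x\in\mathcal U\mid \xIntervalFrequency\ge \intervalHHThreshold - \epsError}$. Combining the two inclusions establishes the observation.

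There is no genuine technical obstacle here; the argument is essentially a one-line application of each side of the additive guarantee. The only point worth stating carefully is that the slack in the soundness inclusion is measured against $\epsError = W\epsilon$ (the fixed window-scaled error of \IFProblem{}), rather than against the smaller $(j-i)\epsilon$ that the possibly shorter queried interval might suggest. Since the target \IHHProblem{} guarantee is itself phrased with the same $\epsError$, the two quantities match and no refinement of the error term is required.
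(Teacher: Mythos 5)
Your proof is correct and matches the paper's intended reasoning exactly: the paper states this as an unproved observation precisely because the two inclusions follow one-line-each from the lower and upper sides of the \IFProblem{} guarantee, which is what you verified. Your closing remark that the slack is the fixed $\epsError$ (rather than an interval-scaled $(j-i)\epsilon$) is also consistent with how the paper phrases both the \IFProblem{} and \IHHProblem{} guarantees, so nothing is missing.
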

Specifically, all algorithms presented in this paper can compute the set $\IntervalHHEstimator$ in time $O(\oneOverE)$ without iterating over all universe elements.
Thus, we note that all proposed algorithms can efficiently compute the $\IntervalHHEstimator$ \mbox{suggested above.}

\subsection{Hierarchical Heavy Hitters}

Next, we describe how our algorithms can be used for answering interval HHH queries (see~\cite{HHHMitzenmacher} for formal definitions). In~\cite{HHHMitzenmacher}, Mitzenmacher et al. proposed combining their approach (which originally utilized Space Saving~\cite{SpaceSavings}) with sliding window algorithms such as~\cite{HungLT10,WCSS} to solve HHH on sliding windows. However, such an approach yields a fixed window size algorithm. By replacing the underlying black box algorithm by our interval query solutions, we get an algorithm that solves HHH on interval queries. This is also orthogonal to the other approaches; a combination of the extensions proposed in this chapter would allow finding HHH over time-based intervals, finding distributed HHH (see the following section), or finding HHH in terms of traffic volume (Section~\ref{sec:volumetric}).

\fi

\subsection{The Distributed Model}
\label{sec:distributed}

We now consider applying our algorithms in distributed settings.
Here, multiple streams are received at various sites $S_1$,\dots,$S_r$ ($r>1$) and each site maintains its own instance of the chosen algorithm, e.g., $ACC_k$, $\mathit{HIT}$, etc.
Obtaining a global view of the system's status requires merging data structures from all individual sites.
The common way of serving such queries is to have all individual sites transmit a copy of their data structures to a central controller $C$, which merges them into a global data structure.
This can be done either periodically assuming synchronized clocks between the sites, or in a coordinated manner initiated periodically by $C$.
Queries are forwarded to the controller that computes the reply based on its merged data-structure.
This model is communication efficient when queries are frequent, since queries are served directly by the controller and the rate in which the distributed sites need to communicate with the controller can \mbox{be lower than the query rate.}

In contrast, when queries are not as frequent, the above solution is inefficient, since the sites needlessly update the controller.
To that end, by applying the time based intervals adaptation, our solution enables the reverse model.
That is, given a range query, it is directly propagated to each of the $r$ distributed sites.
Each site returns its locally computed portion and all replies are then merged into a global one.
The reason why time based intervals are needed is that individual streams might arrive at different rates to the various sites.
Hence, it is meaningless to merge the results of queries on an item based window or range.
For this reason, the above approach cannot be applied to item based sketch~algorithms.

Last, as mentioned before, our algorithms provide an $\epsilon$ error guarantee.
Hence, when each of the sites runs its independent instance, the overall error guarantee of the distributed model becomes $r\epsilon$.
Another way of looking at this is that since the space requirement is inversely proportional to the error guarantee, the space requirement for a given error grows with $r$.
Since usually $r$ is a small constant, for most systems this is acceptable.

\subsection{Supporting Traffic Volume Heavy-Hitters}\label{sec:volumetric}
It is often desired to find the heavy hitters in terms of traffic volume. That is, consider a stream in which its item has a \emph{size} and we wish to find the flows that account for most of the bandwidth in a given interval. Formally, we consider a \emph{weighted stream} $\mathcal S = \langle x_1,\mathfrak w_1\rangle, \langle x_2,\mathfrak w_2\rangle,\ldots \in (\mathcal U \times \{1,2,\ldots,M\})^*$ and define a flow's volume as the sum of sizes for items that belong to it.

Intuitively, to address this problem we can add the weight of the item in Line~\ref{line:ssAdd} of Algorithm~\ref{alg:reduction}, and change the condition of Line~\ref{line:SSoverflow} to consider whether the current estimation exceeds a new multiple of $\blockSize\cdot M$.
The Space Saving algorithm~\cite{SpaceSavings} can find weighted heavy hitters over a stream with $O(\log\epsilon^{-1})$ update time~\cite{berinde2010space}. Recent breakthroughs~\cite{DIM-SUM, FAST, anderson2017high} improve this runtime to a constant. Thus, we can solve the interval volume estimation and (weighted) heavy hitters problems with the same asymptotic complexity as the unweighted variants and with an error of at most $WM\epsilon$. This is a generalization of the result of~\cite{FAST} that finds weighted heavy hitters over \mbox{fixed size windows.}

\section{Discussion}
\label{sec:discussion}
In this paper, we studied the problems of flow frequency estimation over intervals that are passed at query time.
Such capabilities can be useful when one wishes to maintain the above statistics over multiple sliding windows and for performing drill-down queries, e.g., for root cause analysis of network anomalies.

We presented formal definitions of these generalized problems and explored three alternative solutions: a naive approach (\textit{RAW}) and more sophisticated solutions called \textit{HIT} and $ACC_k$.
Both \textit{HIT} and $ACC_k$ process updates in $O(1)$, but differ in their space vs. query time tradeoff:
\textit{HIT} is asymptotically memory optimal but answers queries in logarithmic time whereas $ACC_k$ processes queries in $O(1)$ but consumes more space.
\changed{Moreover, \textit{HIT} interval queries performance is affected by interval size: as interval size gets larger, query gets slower. In contrast, \changed{the} $ACC_k$ algorithms are not affected by interval size.}
In fact, \textit{HIT}'s space requirement is similar to the memory requirement of the state of the art algorithm that can only cope with fixed size windows.
Hence, \textit{HIT} is adequate when space is tight \changed{or the intervals are small} while $ACC_k$ is suitable for real time query processing.
Both our advanced algorithms are faster and more space efficient than ECM~\cite{papapetrou2015sketching}, the previously known solution for interval queries.
This is true both asymptotically and in measurements over real-world traces, in which we demonstrated orders of magnitude runtime improvements as well as at least $40\%$ \mbox{memory reductions for similar estimation errors.}
\ifdefined\VLDB
Our approach can be applied to additional related problems.
For example, we showed in Section~\ref{sec:extensions} how to adapt our algorithms to answer queries over time based intervals as well as to identifying heavy hitters.
This can be further generalized to the \emph{hierarchical heavy hitters} (HHH) problem~\cite{HHHMitzenmacher}, which is useful in detecting distributed denial of service attacks (DDoS). In the latter, one can replace the Space Saving \changed{instances} employed by~\cite{HHHMitzenmacher} with $HIT$ or $ACC_k$ to detect HHH over query intervals!

\noindent \textbf{Code Availability:} \changed{All code is available online~\cite{opensource}.}
\paragraph{Acknowledgements:} We thank the anonymous reviews for their insightful comments that helped improve this paper. We are also grateful for many comments and helpful observations made by Dimitrios Kaliakmanis.
\else
\vbox{
Our approach can be applied to additional related problems.
For example, we showed in Section~\ref{sec:extensions} how to adapt our algorithms to answer queries over time based intervals as well as to identifying heavy hitters.
This can be further generalized to the \emph{hierarchical heavy hitters} (HHH) problem~\cite{HHHMitzenmacher}, which is useful in detecting distributed denial of service attacks (DDoS). In the latter, one can replace the Space Saving \changed{instances} employed by~\cite{HHHMitzenmacher} with $HIT$ or $ACC_k$ to detect HHH over query intervals!\\\\
\textbf{Code Availability:} \changed{All code is available online~\cite{opensource}.}\\
\textbf{Acknowledgements:} We are grateful for many helpful comments and observations made by \mbox{Dimitrios Kaliakmanis.}
}
\fi

\newpage
\bibliographystyle{abbrv}
\bibliography{refs}  

\end{document}